\providecommand{\U}[1]{\protect\rule{.1in}{.1in}}
\newtheorem{theorem}{Theorem}
\newtheorem{lemma}[theorem]{Lemma}
\newtheorem{remark}[theorem]{Remark}
\newenvironment{proof}[1][Proof]{\noindent\textbf{#1.} }{\ \rule{0.5em}{0.5em}}
\numberwithin{equation}{section}
\begin{document}

\title{Wave Matrix Lindbladization I: Quantum Programs for Simulating Markovian Dynamics}
\author{Dhrumil Patel\thanks{Department of Computer Science, Cornell University, Ithaca, New York 14850, USA, Email: djp265@cornell.edu} \and Mark M. Wilde\thanks{School of Electrical and Computer Engineering, Cornell University, Ithaca, New York 14850, USA, Email: wilde@cornell.edu}}
\maketitle

\begin{abstract}Density Matrix Exponentiation is a technique for simulating Hamiltonian dynamics when the  Hamiltonian to be simulated is available as a quantum state. In this paper, we present a natural analogue to this technique, for simulating Markovian dynamics governed by the well known Lindblad master equation. For this purpose, we first propose an input model in which a Lindblad operator $L$ is encoded into a quantum state $\psi$. Then, given access to $n$ copies of the state $\psi$, the task is to simulate the corresponding Markovian dynamics for time $t$. We propose a quantum algorithm for this task, called Wave Matrix Lindbladization, and we also investigate its sample complexity. We show that our algorithm uses $n = O(t^2/\varepsilon)$ samples of $\psi$ to achieve the target dynamics, with an approximation error of~$O(\varepsilon)$.
\end{abstract}

\bigskip
\begin{quote}
\textit{We dedicate our paper to the memory of G\"oran Lindblad (July~9, 1940--November~30, 2022), whose profound contributions to quantum information science, in the form of the Lindblad master equation~\cite{Lindblad1976OnSemigroups} and the data-processing inequality for quantum relative entropy \cite{Lin75}, will never be forgotten.}
\end{quote}

\newpage
\tableofcontents

\section{Introduction}

\subsection{Background}

By the early 1980s, it was evident that simulating the behavior of complex quantum systems using a classical computer is computationally expensive, since it involves keeping track of an exponentially large number of quantum state amplitudes. 
In order to overcome this difficulty, Feynman proposed the idea of simulating quantum systems using a computational device that is itself quantum mechanical \cite{Feynman1982SimulatingComputers}. 
Originally, it was merely a conjecture; however, it gave birth to the field of quantum simulation \cite{Lloyd1996UniversalSimulators,Georgescu2013QuantumSimulation}, which is currently one of the most anticipated applications of a quantum computer. 
The essential idea behind quantum simulation involves using a quantum computer to perform the simulation, which can then allow for a detailed investigation of the quantum system being simulated.

Hamiltonian simulation is a particular kind of quantum simulation that involves simulating the behavior of a closed quantum system.
This is an extensively investigated area, and many quantum algorithms have been developed to date to solve this specific problem \cite{Lloyd1996UniversalSimulators,Berry2013ExponentialHamiltonians,Berry2012Gate-efficientAlgorithms,Lloyd2014QuantumAnalysis,Berry2014SimulatingSeries,Berry2015HamiltonianParameters,Low2016OptimalProcessing,Kimmel2017HamiltonianComplexity}. 

While Hamiltonian simulation is a well studied problem, it is limited to only closed quantum systems. 
In many real-world scenarios, quantum systems are subject to the influence of an environment, leading to more complex dynamics that are better described by open system models. 
Moreover, if the dynamics of an open system are Markovian in nature (i.e., its quantum state at time $t+ \Delta$ only depends on the quantum state at time $t$ and is independent of states before time $t$), then such dynamics are well captured by the Lindblad master equation.
The general form of this equation was delineated independently by Göran Lindblad \cite{Lindblad1976OnSemigroups} and by Gorini, Kossakowski, and Sudarshan \cite{Gorini2008CompletelySystems}, and so, this equation is also known as the Gorini--Kossakowski--Sudarshan--Lindblad equation.
The significance of this master equation cannot be overstated.
It is crucial in understanding the behavior of a wide range of quantum systems and scenarios \cite{breuer2002theory,Weiss2021}, including condensed matter \cite{Prosen2011OpenTransport,Manzano2011QuantumLaw,Olmos2012FacilitatedGlasses}, quantum chemistry \cite{nitzan2006chemical,may2008charge},  quantum optics \cite{Plenio1998,Gardiner2004QuantumBooks}, entanglement preparation \cite{Kraus2008,Kastoryano2011,Reiter2016}, 
 thermal state preparation \cite{Kastoryano2014QuantumCase}, quantum state engineering \cite{verstraete2009quantum},
and the effects of noise on quantum computers~\cite{Magesan2012ModelingCircuits}. 

In this paper, we consider the problem of simulating the Lindbladian evolution of a finite-dimensional quantum system in an initial state $\rho$ for time~$t$. This evolution is governed by the following Lindblad master equation:
\begin{equation}
\label{eq:lindbladmaster}
    \frac{\partial \rho}{\partial t} =  \mathcal{L} (\rho) \coloneqq  -i [H, \rho] + \sum_{k=1}^{K} L_{k}\rho L_{k}^{\dagger} - \frac{1}{2} \left \{L_{k}^{\dagger}L_{k}, \rho \right\},
\end{equation}
where $H$ is a Hermitian operator representing the system's Hamiltonian, and the operators $\{L_{k}\}_{k=1}^{K}$ are called Lindblad operators, which are not necessarily Hermitian and in fact have no constraints on them. In addition, the superoperator $\mathcal{L}$ is known as a Lindbladian. The notation $\{A, B\}$ above refers to the anti-commutator of operators $A$ and $B$, i.e., $\{A, B\} = AB + BA$. By simulating the aforementioned evolution for time $t$, we mean implementing its corresponding quantum channel~$e^{\mathcal{L}t}$, which is the solution of \eqref{eq:lindbladmaster}, where
\begin{equation}
    e^{\mathcal{L}t}(\rho) = \sum_{k=0}^\infty \frac{\mathcal{L}^k(\rho)t^k}{k!} ,
    \label{eq:Lind-expand}
\end{equation}
and $\mathcal{L}^k$ denotes $k$ sequential applications of the Lindbladian $\mathcal{L}$.
For small $t$, note that $e^{\mathcal{L}t}(\rho) = \rho +  \mathcal{L}(\rho) t + O(t^2)$, and we make use of this expansion in what follows.

Throughout our paper, we focus on a simple case in which the Lindblad master equation consists of only a single Lindblad operator $L$. For clarity, we rewrite the Lindbladian corresponding to this simple case:
\begin{equation}\label{eq:lindblad-master-single}
    \mathcal{L} (\rho) = L\rho L^{\dagger} - \frac{1}{2} \left \{L^{\dagger}L, \rho \right\}.
\end{equation}
We use this basic scenario as a starting point, as it is easier to grasp the intuition behind the techniques we introduce here. 
Furthermore, one can easily extend this case to simulate more complex Lindbladian evolutions with multiple Lindblad operators, by using Proposition~2 of \cite{Childs2016EfficientDynamics}. Specifically, this proposition states that, given efficient implementations of polynomially many Lindbladians $\mathcal{L}_{1}, \mathcal{L}_{2}, \ldots, \mathcal{L}_{m}$, one can efficiently implement their linear combination $\sum_{i=1}^{m} \mathcal{L}_{i}$.

More recently, there has been growing interest in developing efficient quantum algorithms for simulating the dynamics of open quantum systems, as given by \eqref{eq:lindbladmaster} \cite{Childs2016EfficientDynamics,Cleve2016EfficientEvolution,KSMM22,Schlimgen2022QuantumOperators,Suri2022Two-UnitarySimulation} (see \cite{Miessen2022QuantumDynamics} for a review). These works are primarily based on the assumption that some succinct representation of the Lindblad operators or black-box access to them are provided beforehand. 
For example, a list of non-zero coefficients when writing these operators as a linear combination of Paulis is one such succinct representation~\cite{Cleve2016EfficientEvolution}.

In our paper, we approach the above problem from a different angle. We assume that the Lindblad operator $L$ is encoded in a pure quantum state~$|\psi\rangle$, and we have access to multiple copies of this state. That is, we suppose that~$L$ is encoded  in $|\psi\rangle$ in the following manner:
\begin{equation}
    |\psi\rangle \coloneqq (L\otimes I) |\Gamma\rangle,
    \label{eq:program-state-Lindblad-L}
\end{equation}
where $|\Gamma\rangle \coloneqq \sum_{j} |j\rangle |j\rangle $ is a maximally entangled vector. This way of encoding lies at the heart of our quantum algorithm, and as far as we are aware, it is the first time that such an encoding scheme has been proposed. We refer to such a state as a program state, as it can be programmed to encode any square linear operator according to the problem at hand. The only constraint on the operator $L$, encoded as above, is that $\left\Vert L\right\Vert_{2} = 1$, where $\left\Vert A \right\Vert_{2} \coloneqq \sqrt{\operatorname{Tr}[A^{\dagger}A]}$ is the Schatten-2 norm of a matrix $A$ (also known as the Hilbert--Schmidt norm). This constraint on $L$ arises from the fact that $|\psi\rangle$ is a quantum state. That being said, for encoding a Lindblad operator $L'$ with an arbitrary norm and corresponding Lindbladian $\mathcal{L}'$, we can suppose that its normalized version, i.e., $L'/\left\Vert L' \right\Vert_{2}$, is encoded in a quantum state. Then, for simulating its corresponding quantum channel $e^{\mathcal{L}'t}$, we simulate the channel $e^{\mathcal{L}' t'/\left\Vert L' \right\Vert_{2}^2 }$ for time $t' = \left\Vert L' \right\Vert_{2}^2 t$, so that $e^{\mathcal{L}'t}=e^{\mathcal{L}' t'/\left\Vert L' \right\Vert_{2}^2 }$. This is evident from \eqref{eq:lindblad-master-single}. Thus, without loss of generality,  we can assume that the Lindblad operator $L$ is normalized, i.e., $\left\Vert L\right\Vert_{2} = 1$, and we do so throughout our paper.

We refer to this newly introduced method of Lindbladian simulation as \textit{Wave Matrix Lindbladization}. The reasoning behind this terminology is that~$L$ is known as a wave matrix \cite{McCaul2023TheDynamics}, and we are ``lindbladizing" it, i.e., transforming it from a wave matrix into a Lindblad operator. Essentially, we ask: given one copy of an unknown quantum state $\rho$ and $n$ copies of the program state $\psi \coloneqq |\psi\rangle\! \langle\psi|$, can we approximately implement the quantum channel $e^{\mathcal{L}t}$ up to an approximation error $\varepsilon$? That is, can we realize the following transformation?
\begin{equation}\label{eq:task-map}
    \rho \otimes \underbrace{\psi \otimes \cdots \otimes \psi}_{\text{$n$ times}} \overset{\overset{\varepsilon}{\approx}}{\longrightarrow} e^{\mathcal{L}t}(\rho).
\end{equation}

It is worth noting that Wave Matrix Lindbladization can be seen as a natural analogue to \textit{Density Matrix Exponentiation} \cite{Lloyd2014QuantumAnalysis}. Density Matrix Exponentiation is a well known protocol for Hamiltonian simulation, and it is also used in the context of quantum machine learning \cite{Biamonte2016QuantumLearning}. The task here is to implement a unitary $e^{-i\rho t}$ given multiple copies of an unknown quantum state $\rho$, so that the quantum state $\rho$ serves as a Hamiltonian in this case.
 Similarly, in Wave Matrix Lindbladization, given multiple copies of an unknown quantum state $\psi$ encoding an operator $L$, the task is to ``lindbladize" this operator, i.e., implement the transformation given by~\eqref{eq:task-map}.

\subsection{Summary of Main Results}

In this paper, we propose a quantum algorithm that implements the quantum channel $e^{\mathcal{L}t}$ with some desired accuracy $\varepsilon$, where $0 < \varepsilon < 1$.
We then investigate the sample complexity used by our algorithm (see Section~\ref{sec:single-op-with-no-H}). 
By sample complexity, we mean the number of copies of the program state~$\psi$ used to achieve the above task. 
Furthermore, we extend the single Lindblad operator case to the case in which the Lindblad master equation also consists of a Hamiltonian term (see Section~\ref{sec:single-op-with-H}), and 
we propose a quantum algorithm for this case as well.

\textbf{Key Idea --- }Our quantum algorithm primarily involves two steps, which we repeat for each copy of the program state $\psi$. Suppose that $\rho$ is in register~1 and the program state $\psi$ is in registers 2 and 3. In the first step, we evolve~$\rho$ and $\psi$ according to the following Lindbladian $\mathcal{M}$ for a short duration of time~$\Delta \coloneqq t/n$:
\begin{equation}\label{eq:intro-M}
        \mathcal{M}\left (\rho \otimes  \psi\right) \coloneqq M(\rho \otimes  \psi) M^{\dagger} - \frac{1}{2} \left \{M^{\dagger}M, \rho \otimes  \psi \right\},
\end{equation}
where $M$ is a Lindblad operator defined as
\begin{equation}
        M \coloneqq \frac{1}{\sqrt{d}}\left(I_1\otimes |\Gamma\rangle\! \langle \Gamma |_{23}\right) \left( \mathsf{SWAP}_{12} \otimes I_3\right).
        \label{eq:M-op-initial}
\end{equation}
Here, $\mathsf{SWAP}_{12}$ is the swap operation that swaps the states in registers 1 and~2, and $|\Gamma\rangle$ is the maximally entangled vector. We explicitly define them both later in \eqref{eq:max-ent-vec} and \eqref{eq:swap-def}, respectively. 
Note that by evolving according to \eqref{eq:intro-M} for time~$\Delta$, this leads to the application of the quantum channel~$e^{\mathcal{M}\Delta}$. 
Furthermore, the second step of our algorithm involves tracing out the program state $\psi$. To see this algorithm pictorially, please refer to Figure~\ref{fig:n-times}.

\begin{figure}[t]
    \centering
    \includegraphics[scale=0.19]{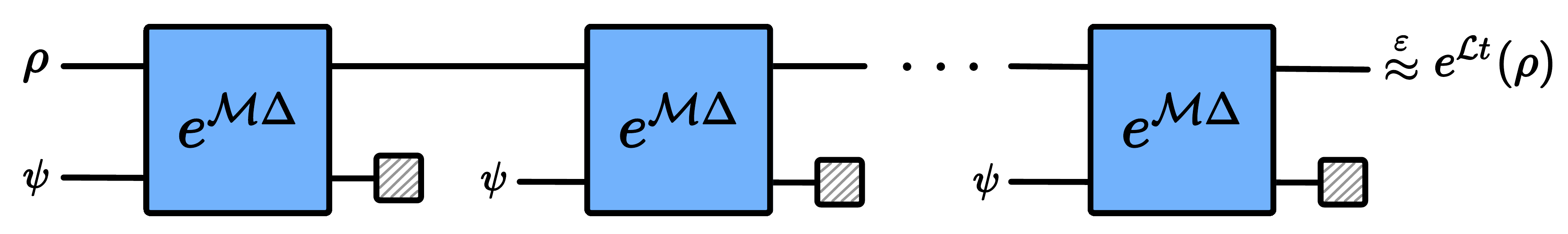}
    \caption{Our quantum algorithm, repeated $n = O(t^2/\varepsilon)$ times, for approximating the target quantum channel, i.e., $e^{\mathcal{L}t}$, with an approximation error of~$\varepsilon$. Each small, hatched square represents the trace-out operation. At the very left, the state $\rho$ is in register~1, and the program state $\psi$ is in registers~2 and~3, represented as a single line at this instance and thereafter for simplicity.}
    \label{fig:n-times}
\end{figure}

Formally, we can write the above two steps as
\begin{equation}
\operatorname{Tr}_{23}\!\left[e^{\mathcal{M}\Delta}(\rho \otimes \psi)\right] = \rho + \operatorname{Tr}_{23}\!\left[\mathcal{M}(\rho \otimes \psi) \right] \Delta + O(\Delta^2),
\end{equation}
where the notation $\operatorname{Tr}_{23}[\cdot]$ is used to denote the action of tracing out registers 2 and 3. The above equality follows from a Taylor series expansion, as discussed just after \eqref{eq:Lind-expand}. Now, the critical step here is to prove that
\begin{equation}
\label{eq:intro-cs}
    \operatorname{Tr}_{23}\!\left[\mathcal{M}(\rho \otimes \psi) \right] = \mathcal{L}(\rho).
\end{equation}
We prove this equality in detail in Section~\ref{sec:single-op-with-no-H} and Appendix~\ref{app:key-lemma}. The equality in \eqref{eq:intro-cs} is important because it implies the following:
\begin{align}
\operatorname{Tr}_{23}\!\left[e^{\mathcal{M}\Delta}(\rho \otimes \psi)\right] & = \rho + \operatorname{Tr}_{23}\!\left[\mathcal{M}(\rho \otimes \psi) \right] \Delta + O(\Delta^2)\\
& = \rho + \mathcal{L}(\rho) \Delta + O(\Delta^2) \\
& = e^{\mathcal{L}\Delta}(\rho) + O(\Delta^2).
\end{align}

If we repeat these two steps $n = O(t^2/\varepsilon)$ times, then we can approximate the target channel, i.e., $e^{\mathcal{L}t}$, with an approximation error of $O(\varepsilon)$. We prove this statement in Theorem~\ref{thm:single-op}. This theorem in turn  is a consequence of the key lemma of this paper (Lemma~\ref{lemma:key-lemma} in Appendix~\ref{app:key-lemma}), which establishes the equality in \eqref{eq:intro-cs}, as well as the error analysis in Appendix~\ref{app:error-analysis}.

\subsection{Notation}

We use the notation $\mathcal{H}_{S}$ to denote a $d$-dimensional Hilbert space associated with a quantum system $S$. 
We denote the set of quantum states acting on $\mathcal{H}_{S}$ by~$\mathcal{D}(\mathcal{H}_{S})$. 
Let $\operatorname{Tr}[X]$ denote the trace of a matrix $X$, i.e., the sum of its diagonal elements. 
Also, let $X^{\dagger}$ denote the Hermitian conjugate (or adjoint) of the matrix $X$.
The Schatten $p$-norm of a matrix $X$ is defined for $p \in [1, \infty)$ as follows:
\begin{equation}
    \left\Vert X\right\Vert_{p} \coloneqq \left(\operatorname{Tr}\!\left[\left(X^{\dagger}X\right)^{\frac{p}{2}}\right]\right)^{\frac{1}{p}}.
\end{equation}
For the purpose of this paper, we use Schatten norms with $p =1$ (also called trace norm) and $p=2$ (Hilbert--Schmidt norm). Furthermore, let  $[X, Y] \coloneqq XY - YX$ and $\{X, Y\} \coloneqq XY + YX$ denote the commutator and anti-commutator of the operators $X$ and~$Y$, respectively. 

The diamond distance between two quantum channels $\mathcal{N}$ and $\mathcal{M}$ is defined as follows \cite{Kitaev1997QuantumCorrection}:
\begin{equation}
     \left\Vert \mathcal{N} - \mathcal{M} \right\Vert_{\diamond} \coloneqq \sup_{\rho \in \mathcal{D}(\mathcal{H}_{R} \otimes \mathcal{H}_{S})}  \left \Vert (\mathcal{I}_{R} \otimes \mathcal{N}  )(\rho) - (\mathcal{I}_{R} \otimes \mathcal{M}  )(\rho)  \right \Vert_{1},
\end{equation}
where $R$ is a reference system and $\mathcal{I}_{R}$ is the identity channel acting on the system $R$. An important point to note here is that, in the above definition, the dimension of $R$ is arbitrarily large. However, it is known that it suffices to perform the optimization over pure bipartite states with the dimension of~$R$ equal to the dimension of $S$. Furthermore, the quantity in the objective function of the above optimization is the trace distance, defined as $\left \Vert \rho -\sigma \right\Vert_{1}$ for two quantum states $\rho, \sigma \in \mathcal{D}(\mathcal{H_{S}})$. In what follows, we employ the normalized diamond distance $\frac{1}{2} \left\Vert \mathcal{N} - \mathcal{M} \right\Vert_{\diamond}$ to measure approximation error---the normalization factor of $\frac{1}{2}$ guarantees that $\frac{1}{2} \left\Vert \mathcal{N} - \mathcal{M} \right\Vert_{\diamond} \in [0,1]$ for quantum channels $\mathcal{N}$ and $\mathcal{M}$.

\begin{figure}[t]
    \centering
    \includegraphics[scale=0.31]{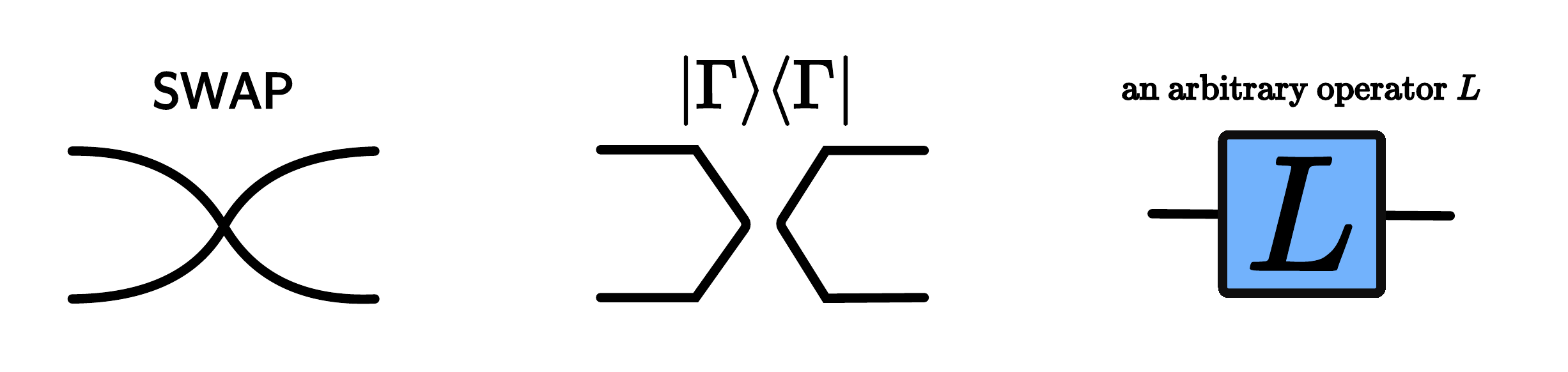}
    \caption{Tensor-network diagrams of operators $\mathsf{SWAP}$, $|\Gamma\rangle\!\langle\Gamma|$, and $L$.}
    \label{fig:lindblad-symb}
\end{figure}

For $\rho$ a quantum state in $\mathcal{D}(\mathcal{H}_{R} \otimes \mathcal{H}_{S})$, we denote the partial trace over the Hilbert space $\mathcal{H}_{R}$ by $\operatorname{Tr}_{R}[\rho]$. We also sometimes use a different notation for partial trace; i.e., given a multi-partite state $\rho$, we use the notation $\operatorname{Tr}_{k}[\rho]$ to denote the action of tracing out the $k^{\text{th}}$ party. Furthermore, we define the maximally entangled vector in $\mathcal{H}_{R} \otimes \mathcal{H}_{S}$ as
\begin{equation}
|\Gamma\rangle_{RS} \coloneqq \sum_{i} |i\rangle_{R} |i\rangle_{S}.    
\label{eq:max-ent-vec}
\end{equation}
 We also define the unitary swap operation in the following way:
\begin{equation}
    \mathsf{SWAP} \coloneqq \sum_{i, j} |i\rangle\! \langle j| \otimes |j\rangle\! \langle i|.
    \label{eq:swap-def}
\end{equation}

In our paper, we make extensive use of tensor-network diagrams. Figure~\ref{fig:lindblad-symb} depicts tensor-network diagrams for some basic operators defined above, such as $\mathsf{SWAP}$ and $|\Gamma\rangle\!\langle\Gamma|$. For more background on tensor-network diagrams, please refer to \cite{Biamonte2017TensorNutshell}. Throughout this paper, we sometimes suppress system labels for ease of notation; however, they will be clear from the context.

\section{Quantum Algorithms for Simulating Markovian Dynamics}

\subsection{Single-Operator Case With No Hamiltonian Term}\label{sec:single-op-with-no-H}

In this section, we provide a detailed analysis of our quantum algorithm for simulating the quantum channel $e^{\mathcal{L}t}$, in the case that the Lindbladian has only one Lindblad operator $L$, as in~\eqref{eq:lindblad-master-single}. The algorithm simulates the channel~$e^{\mathcal{L}t}$ up to error $\varepsilon$ in normalized diamond distance, using $n$ copies of the program state $\psi$ that encodes $L$ (recall \eqref{eq:program-state-Lindblad-L} here).
Since we are interested in implementing the aforementioned channel with some desired accuracy in diamond distance, we assume that the channel input state is a joint quantum state of two systems rather than just one.
Therefore, let $\rho \in \mathcal{D}(\mathcal{H}_{R} \otimes \mathcal{H}_{S})$ be an unknown quantum state given as input over the joint system $RS$, where the system $R$ is a reference system. Furthermore, let the $k^{\text{th}}$ copy of the program state $\psi$ be a quantum state of the joint system $P_{k}Q_{k}$.

\textbf{Algorithm~1 ---} Set $n\in \mathbb{N}$, with the particular choice specified later. Set $k=1$. Given the $k^{\text{th}}$ copy of $\psi$, i.e., $\psi_{P_{k}Q_{k}}$, perform the following two steps:\label{algo:single-op}
\begin{enumerate}
    \item Evolve the joint quantum state $\rho_{RS}\otimes \psi_{P_{k}Q_{k}}$ according to the dynamics realized by the following Lindbladian $\mathcal{M}$, for some small duration of time~$\Delta = t/n $:
    \begin{equation}\label{eq:single-fixed-L-master}
        \mathcal{M}\left (\rho_{RS} \otimes  \psi_{P_{k}Q_{k}}\right) \coloneqq M(\rho_{RS} \otimes  \psi_{P_{k}Q_{k}}) M^{\dagger} - \frac{1}{2} \left \{M^{\dagger}M, \rho_{RS} \otimes  \psi_{P_{k}Q_{k}} \right\}.
    \end{equation}
    In the above, the Lindblad operator $M$ acts on the joint system $RSP_{k}Q_{k}$, and we define it as
    \begin{equation}
        M \coloneqq \frac{1}{\sqrt{d}}\left(I_{RS}\otimes |\Gamma\rangle\! \langle \Gamma |_{P_{k}Q_{k}}\right) \left( I_{R}\otimes\mathsf{SWAP}_{SP_{k}} \otimes I_{Q_{k}}\right).
        \label{eq:orig-M-def}
    \end{equation}
    (Note that we have redefined $M$ as compared to \eqref{eq:M-op-initial}, in order to include the trivial action on the reference system $R$.)
    \item Trace out the systems $P_{k}Q_{k}$.
\end{enumerate}
We repeat the above procedure using each copy of $\psi$, i.e., for all $k$ ranging from $1$ to $n$.

The following theorem states that the above algorithm uses $n = O(t^2/\varepsilon)$ copies of $\psi$ to simulate the Lindbladian evolution of $\rho_{RS}$, given by~\eqref{eq:Lind-expand}--\eqref{eq:lindblad-master-single}, for time $t$, such that the final state is $\varepsilon$-close in normalized trace distance to the ideal target state $\left(\mathcal{I}_{R} \otimes e^{\mathcal{L}t} \right)\left(\rho_{RS}\right)$, for an arbitrary input state $\rho_{RS}$.

\begin{theorem}\label{thm:single-op}
Given access to $n$ copies of the program state $\psi \in  \mathcal{D}(\mathcal{H}_{P} \otimes \mathcal{H}_{Q})$, which encodes the Lindblad operator $L$ as in \eqref{eq:program-state-Lindblad-L}, there exists a quantum algorithm~$\mathcal{A}$ such that the following error bound holds:
    \begin{equation}
        \frac{1}{2}\left \Vert e^{\mathcal{L}t} - \mathcal{A} \right \Vert_{\diamond} \leq \varepsilon,
    \end{equation}
    with only $n = O(t^2/\varepsilon)$ copies of $\psi$. In other words, $\mathcal{A}$ uses only $n = O(t^2/\varepsilon)$ copies of $\psi$ to approximate the channel $ e^{\mathcal{L}t}$ up to $\varepsilon$ error in normalized diamond distance.
\end{theorem}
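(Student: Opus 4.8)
The plan is to establish the theorem in two logically separated pieces: a one-step exactness statement and a Trotter-style error accumulation argument. For the first piece, I would fix a single copy of $\psi$ and prove the key identity $\operatorname{Tr}_{P_k Q_k}[\mathcal{M}(\rho_{RS}\otimes\psi_{P_k Q_k})] = (\mathcal{I}_R\otimes\mathcal{L})(\rho_{RS})$, which is the content of the ``key lemma'' advertised in the excerpt (Lemma~\ref{lemma:key-lemma}). To do this I would expand $\mathcal{M}$ using its definition in \eqref{eq:single-fixed-L-master} into the three terms $M(\rho\otimes\psi)M^\dagger$, $-\tfrac12 M^\dagger M(\rho\otimes\psi)$, and $-\tfrac12(\rho\otimes\psi)M^\dagger M$, substitute $M$ from \eqref{eq:orig-M-def}, and substitute $\psi = (L\otimes I)|\Gamma\rangle\!\langle\Gamma|(L^\dagger\otimes I)$. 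The computation then reduces to manipulating $\mathsf{SWAP}$ and $|\Gamma\rangle\!\langle\Gamma|$ identities — in particular the ``transpose trick'' $(A\otimes I)|\Gamma\rangle = (I\otimes A^{T})|\Gamma\rangle$, the relation $\langle\Gamma|(A\otimes B)|\Gamma\rangle = \operatorname{Tr}[A^{T}B]$, and $\operatorname{Tr}_2[\mathsf{SWAP}_{12}(X_1\otimes Y_2)] = XY$ — together with the normalization $\operatorname{Tr}[L^\dagger L]=\|L\|_2^2=1$ that makes the factor $1/\sqrt d$ come out correctly. I expect the first term to produce $L\rho L^\dagger$ and the two anticommutator terms to produce $-\tfrac12\{L^\dagger L,\rho\}$; carrying the trivial $R$-register along is immediate since $M$ acts as identity on $R$. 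This is best done with tensor-network diagrams as the paper sets up, and is deferred to the appendix.

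For the second piece, I would set $\Delta = t/n$ and let $\mathcal{A}$ be the $n$-fold composition of the channel $\mathcal{T}_k(\cdot) \coloneqq \operatorname{Tr}_{P_k Q_k}[e^{\mathcal{M}\Delta}(\cdot\otimes\psi_{P_k Q_k})]$ (each $\mathcal{T}_k$ being the same channel on $RS$, since the copies of $\psi$ are identical). From the Taylor expansion of $e^{\mathcal{M}\Delta}$ noted after \eqref{eq:Lind-expand} and the key identity, $\mathcal{T}_k = \mathcal{I} + \mathcal{L}\Delta + \mathcal{E}$, where $\mathcal{E}$ is a (not-necessarily-CP) map with $\tfrac12\|\mathcal{E}\|_\diamond = O(\Delta^2)$ — the constant here depends on $\|\mathcal{M}\|_\diamond$, which is bounded because $\|M\|\le 1$ by construction, and on $\|\mathcal{L}\|_\diamond$, which is $O(1)$ because $\|L\|_2=1$ forces $\|L\|_\infty\le 1$. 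Likewise $e^{\mathcal{L}\Delta} = \mathcal{I} + \mathcal{L}\Delta + O(\Delta^2)$ in diamond norm, so $\tfrac12\|\mathcal{T}_k - e^{\mathcal{L}\Delta}\|_\diamond = O(\Delta^2)$. Then by the telescoping/subadditivity property of the diamond norm under composition (each $\mathcal{T}_k$ and $e^{\mathcal{L}\Delta}$ is a channel, hence $\|\cdot\|_\diamond$-contractive, so errors add at most linearly),
\begin{equation}
\frac12\left\Vert e^{\mathcal{L}t} - \mathcal{A}\right\Vert_\diamond = \frac12\left\Vert (e^{\mathcal{L}\Delta})^{\circ n} - (\mathcal{T}_k)^{\circ n}\right\Vert_\diamond \le n\cdot\frac12\left\Vert e^{\mathcal{L}\Delta} - \mathcal{T}_k\right\Vert_\diamond = n\cdot O(\Delta^2) = O\!\left(\frac{t^2}{n}\right).
\end{equation}
Choosing $n = O(t^2/\varepsilon)$ makes the right-hand side $O(\varepsilon)$, which gives the claimed bound (after rescaling $\varepsilon$ by the implied constant).

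The main obstacle is making the error term in $e^{\mathcal{M}\Delta}$ rigorous: one must control the full tail $\sum_{j\ge 2}\mathcal{M}^j\Delta^j/j!$ of the exponential series in diamond norm, uniformly, rather than just citing the informal ``$+O(\Delta^2)$''. This requires a clean operator-norm bound on the Lindbladian superoperator $\mathcal{M}$ (and on $\mathcal{L}$) — e.g. showing $\|\mathcal{M}\|_\diamond \le c$ for an explicit constant using $\|M\|_\infty \le 1$ — and then a standard estimate of the form $\|e^{\mathcal{M}\Delta} - \mathcal{I} - \mathcal{M}\Delta\|_\diamond \le \tfrac12\|\mathcal{M}\|_\diamond^2\Delta^2 e^{\|\mathcal{M}\|_\diamond\Delta}$. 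Combining this with the fact that partial trace over $P_kQ_k$ after tensoring in $\psi$ is diamond-norm nonexpansive yields the per-step $O(\Delta^2)$ bound with an explicit constant, and the rest is the linear accumulation above. I would carry out the one-step identity first (it is the conceptual heart), then the superoperator-norm bounds, then the exponential-tail estimate, and finally the telescoping sum and the choice of $n$; the detailed error bookkeeping is relegated to Appendix~\ref{app:error-analysis}.
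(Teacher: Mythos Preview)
Your proposal is correct and, for the key identity (Lemma~\ref{lemma:key-lemma}), proceeds exactly as the paper does. Where you depart from the paper is in the error accumulation step: the paper's Appendix~\ref{app:error-analysis} carries out an explicit inductive expansion of the iterated map, tracking the first- and second-order terms in $\Delta$ through $n$ steps to arrive at $\rho^{(n)} = \rho + \mathcal{L}(\rho)n\Delta + \frac12\mathcal{J}(\rho)n\Delta^2 + \mathcal{L}^2(\rho)\frac{n(n-1)}{2}\Delta^2 + O(\Delta^3)$, and then compares this directly to the Taylor expansion of $e^{\mathcal{L}n\Delta}(\rho)$. You instead use the standard telescoping bound $\|(e^{\mathcal{L}\Delta})^{\circ n} - \mathcal{T}^{\circ n}\|_\diamond \le n\|e^{\mathcal{L}\Delta} - \mathcal{T}\|_\diamond$ together with a per-step Taylor remainder estimate. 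Your route is shorter and more modular (each step is a channel, so contractivity makes the accumulation linear automatically), while the paper's route makes the structure of the second-order error term more explicit; both yield the same $O(n\Delta^2)=O(t^2/n)$ bound.

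One small inaccuracy to fix: you assert $\|M\|\le 1$ ``by construction,'' but in fact $M^\dagger M = |\Gamma\rangle\!\langle\Gamma|_{13}\otimes I_2$ (this is computed in the paper's proof of Lemma~\ref{lemma:key-lemma}), so $\|M\|_\infty = \sqrt{d}$. This does not break your argument---it just means the constant in your exponential-tail estimate, and hence the implied constant in $n=O(t^2/\varepsilon)$, depends on the dimension $d$. The paper's own analysis has the same hidden dimension dependence in its $O(\Delta^2)$ and $O(\Delta^3)$ terms, so you are not losing anything relative to the stated theorem.
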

 \begin{proof}
In what follows, we provide a brief sketch of the proof. For a more detailed version of the proof, please refer to Appendix~\ref{app:error-analysis}. For ease of notation and simplicity, we refrain from writing the system labels, and we also assume that the input state $\rho$ does not have the reference system $R$ for the purpose of this proof sketch.

Let us begin by expanding the target state $e^{\mathcal{L}t}(\rho)$ using the following Taylor series expansion, as in \eqref{eq:Lind-expand}, at the initial time of $t=0$:
\begin{equation}\label{eq:single-desired-state}
    e^{\mathcal{L}t}(\rho) = \rho + \mathcal{L}(\rho) t +  \frac{1}{2} (\mathcal{L}\circ  \mathcal{L})(\rho) t^2 + \ldots .
\end{equation}
In the first step of Algorithm~1, we simulate the Lindbladian evolution of $\rho \otimes \psi$, using the Lindbladian $\mathcal{M}$ in \eqref{eq:single-fixed-L-master} for some small duration of time $\Delta$, and then trace out $\psi$. The output state obtained after this step is
\begin{equation}\label{eq:single-op-proof-sketch-evo}
    \operatorname{Tr}_{23}\!\left[e^{\mathcal{M}\Delta}(\rho \otimes \psi)\right] = \rho + \operatorname{Tr}_{23}\!\left[\mathcal{M}(\rho \otimes \psi) \right] \Delta + O(\Delta^2),
\end{equation}
where $\rho$ is in register~1 and $\psi$ in registers 2 and 3, and we have again used the expansion in \eqref{eq:Lind-expand}.
Then writing out the second term on the right-hand side of the above equation and using the definition in~\eqref{eq:single-fixed-L-master}, we find that
\begin{multline}
    \operatorname{Tr}_{23}\!\left[\mathcal{M}(\rho \otimes \psi) \right] = \operatorname{Tr}_{23}\!\left [M(\rho \otimes \psi) M^{\dagger} \right] \\
    - \frac{1}{2}\operatorname{Tr}_{23}\!
    \left [M^{\dagger}M \left(\rho \otimes  \psi\right) \right] -  \frac{1}{2}\operatorname{Tr}_{23}\!
    \left [\left(\rho \otimes  \psi \right)M^{\dagger}M  \right].
\end{multline}
We then invoke Lemma~\ref{lemma:key-lemma} in Appendix~\ref{app:key-lemma} to simplify each term on the right-hand side of the above equation. As a result of this, we obtain the following equalities:
\begin{align}
    \operatorname{Tr}_{23}\!\left [M(\rho \otimes \psi) M^{\dagger} \right] & = L\rho L^{\dagger},\\
    \operatorname{Tr}_{23}\!
    \left [M^{\dagger}M \left(\rho \otimes  \psi \right) \right] & =  L^{\dagger}L \rho , \\
    \operatorname{Tr}_{23}\!
    \left [\left(\rho \otimes  \psi\right)M^{\dagger}M  \right] & =   \rho L^{\dagger}L.
\end{align}
For a graphical representation of the above simplifications, please refer to the tensor-network diagrams provided in Figures~\ref{fig:lindblad-first-term-no-H}, \ref{fig:lindblad-second-term-no-H}, and \ref{fig:lindblad-third-term-no-H}.

Using the above equations along with \eqref{eq:lindblad-master-single}, we rewrite \eqref{eq:single-op-proof-sketch-evo} as
\begin{align}
    \operatorname{Tr}_{23}\!\left[e^{\mathcal{M}\Delta}(\rho \otimes \psi)\right] & = \rho + \mathcal{L}(\rho)\Delta + O(\Delta^2) \\
    & = e^{\mathcal{L}\Delta}(\rho) + O(\Delta^2).
\end{align}

Substituting $\Delta = t/n$ and repeating Algorithm~1 for $n = O(t^2 / \varepsilon)$ times produces a quantum state that is $O(\varepsilon)$-close to the ideal target state $e^{\mathcal{L}t}(\rho)$ in normalized trace distance. For a detailed error analysis of this claim, in terms of the diamond distance, please refer to Appendix~\ref{app:error-analysis}.
\end{proof}


\begin{remark}
\label{rem:M-choice}
We note here that the simulation is unchanged if we employ the following definition of $M$, instead of that given in \eqref{eq:orig-M-def}:
\begin{equation}
        M \coloneqq \left(I_{RS}\otimes |\varphi\rangle\! \langle \Gamma |_{P_{k}Q_{k}}\right) \left( I_{R}\otimes\mathsf{SWAP}_{SP_{k}} \otimes I_{Q_{k}}\right).
\end{equation}
where $|\varphi\rangle$ is an arbitrary bipartite state vector. As such, the choice of~$|\varphi\rangle$ in \eqref{eq:orig-M-def} amounts to the maximally entangled state $\frac{1}{\sqrt{d}}|\Gamma\rangle$. The claim here can be checked by examining Figures~\ref{fig:lindblad-first-term-no-H}, \ref{fig:lindblad-subroutine-no-H}, \ref{fig:lindblad-second-term-no-H}, and \ref{fig:lindblad-third-term-no-H}, as well as the proof in Appendix~\ref{app:key-lemma}.
\end{remark}

\begin{figure}
    \centering
    \includegraphics[scale=0.14]{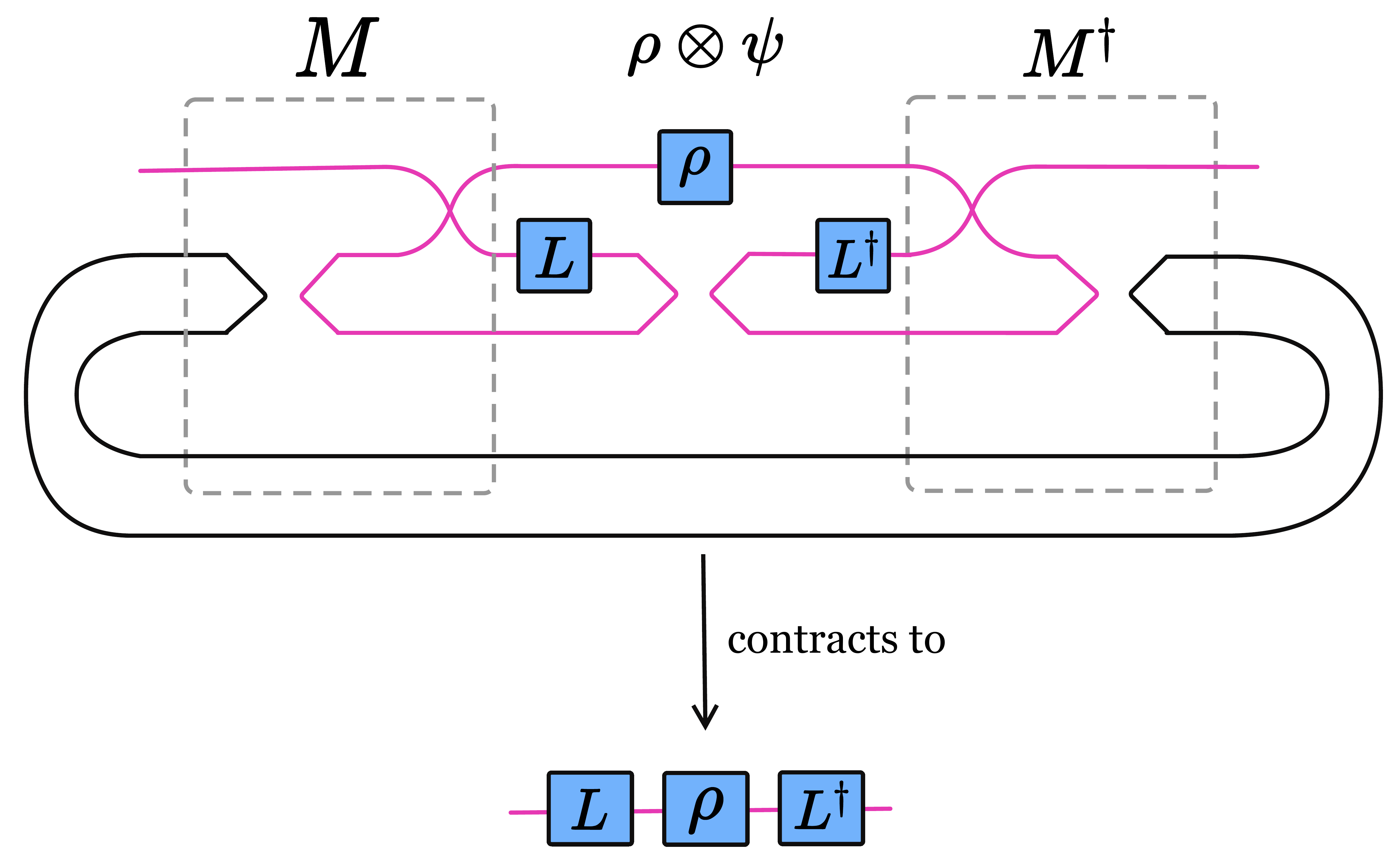}
    \caption{Tensor-network diagram of $\operatorname{Tr}_{23}\!\left[M(\rho \otimes \psi) M^{\dagger} \right]$. The whole network on the top contracts to the network on the bottom due to the partial trace operation over the $2^{\text{nd}}$ and $3^{\text{rd}}$ systems. Simply follow the pink line from the left of the first system to its right to observe this.}
    \label{fig:lindblad-first-term-no-H}
\end{figure}
\begin{figure}
    \centering
    \includegraphics[scale=0.16]{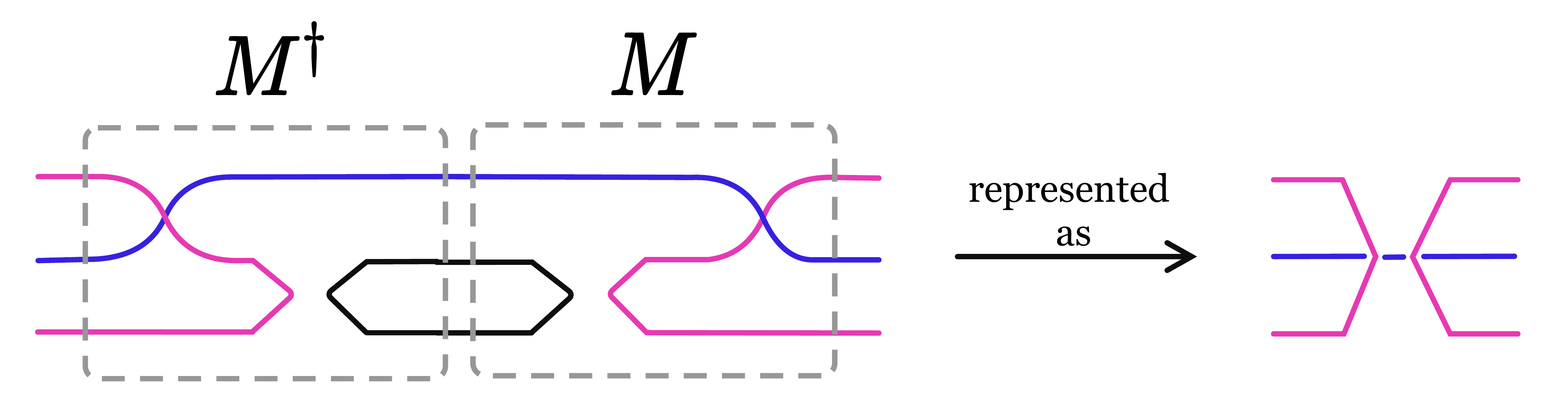}
    \caption{Tensor-network diagram of the operator $M^{\dagger}M$. This network will be used as a subroutine in the following figures. As a result, for the sake of brevity, we use the figure on the right to represent the figure on the left.}
    \label{fig:lindblad-subroutine-no-H}
\end{figure}
\begin{figure}
    \centering
    \includegraphics[scale=0.13]{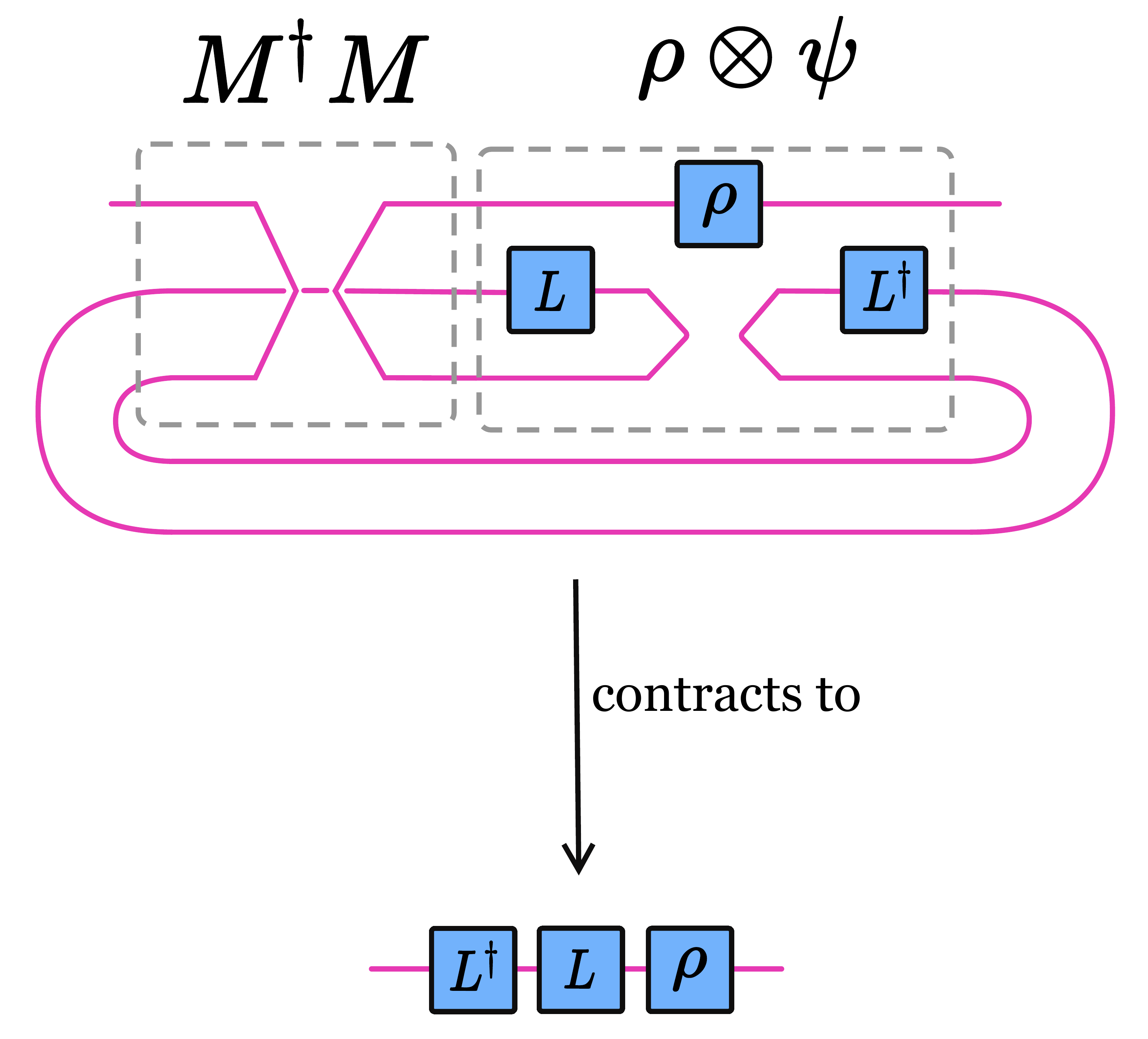}
    \caption{Tensor-network diagram of $\operatorname{Tr}_{23}\!
    \left [M^{\dagger}M \left(\rho \otimes  \psi \right) \right]$. Please refer to Figure~\ref{fig:lindblad-subroutine-no-H} to understand the tensor-network diagram of $M^{\dagger}M$. The whole network on the top contracts to the network on the bottom due to the partial trace operation over the $2^{\text{nd}}$ and $3^{\text{rd}}$ systems. Simply follow the pink line from left of the first system to its right to observe this.}
    \label{fig:lindblad-second-term-no-H}
\end{figure}
\begin{figure}
    \centering
    \includegraphics[scale=0.13]{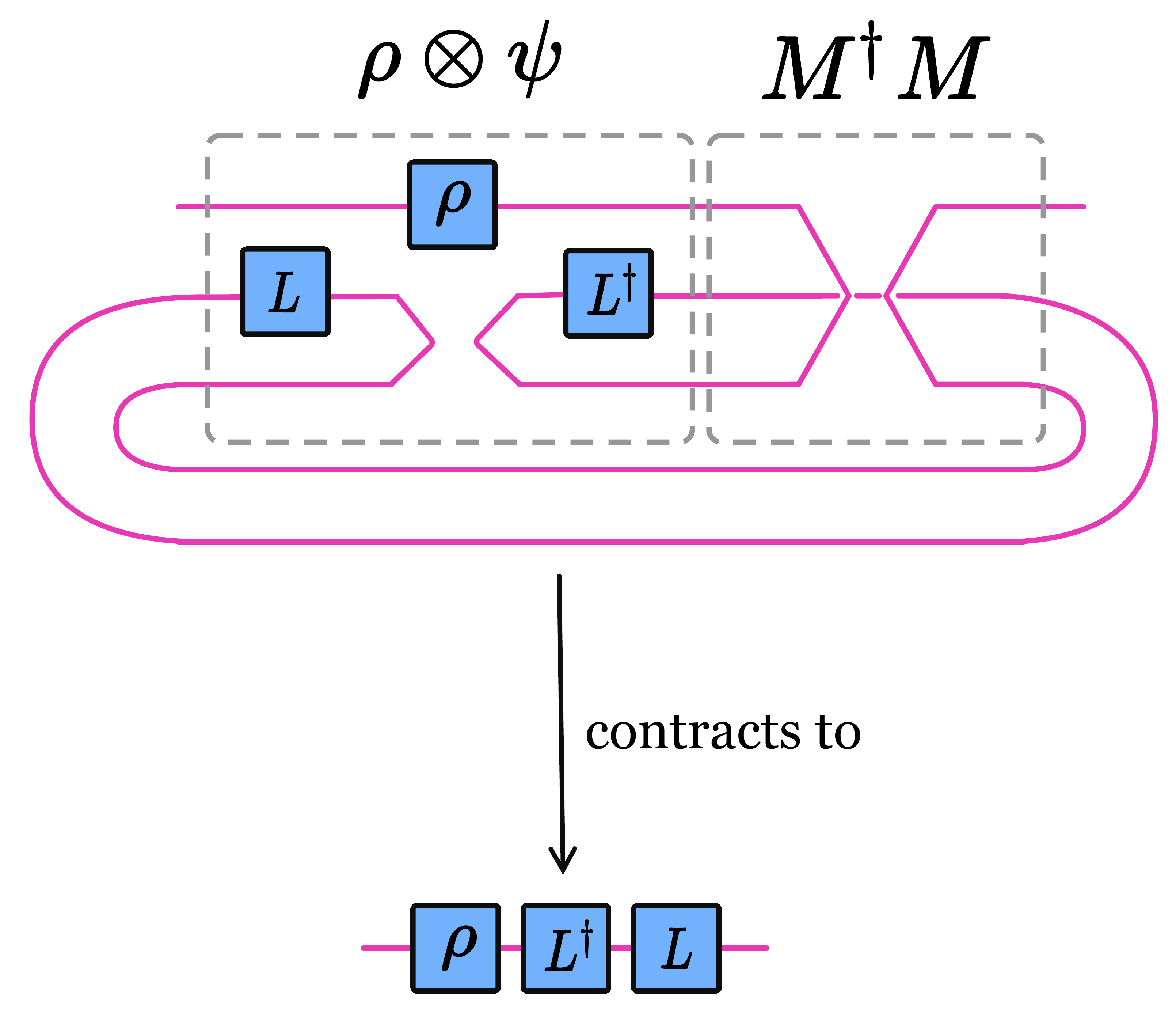}
    \caption{Tensor-network diagram of $\operatorname{Tr}_{23}\!
    \left [\left(\rho \otimes  \psi \right) M^{\dagger}M \right]$. Please refer to Figure~\ref{fig:lindblad-subroutine-no-H} to understand the tensor network diagram of $M^{\dagger}M$. The whole network on the top contracts to the network on the bottom due to the partial trace operation over the $2^{\text{nd}}$ and $3^{\text{rd}}$ systems. Simply follow the pink line from the left of the first system to its right to observe this.}
    \label{fig:lindblad-third-term-no-H}
\end{figure}

\subsection{Single Lindblad Operator Case with Hamiltonian Term}

\label{sec:single-op-with-H}

In this section, we consider the case of simulating a Lindbladian evolution, where the Lindbladian consists of a single Lindblad operator $L$ and a Hamiltonian $H$. To be more precise, we are interested in simulating the Lindbladian dynamics of a quantum state $\rho$ for time $t$ according to the following Lindbladian:
\begin{equation}\label{eq:lindbladmaster2-H}
    \mathcal{L}(\rho) \coloneqq -i [H, \rho] +  L\rho L^{\dagger} - \frac{1}{2} \left \{L^{\dagger}L, \rho \right\}.
\end{equation} 

To begin with, we will look at how to encode the operators $L$ and $H$ into quantum states. We suppose that  $L$ is encoded into a pure quantum state $\psi$ in the same way that was considered previously (see \eqref{eq:program-state-Lindblad-L}). On the other hand, we suppose that the Hamiltonian $H$ is encoded into the density matrix of a quantum state $\sigma$. This type of encoding was first considered in \cite{Lloyd2014QuantumAnalysis} for density matrix exponentiation, and further discussions of it are given in \cite[Eqs.~(1)--(2)]{Kimmel2017HamiltonianComplexity}. Overall, the program state that encodes $H$ and $L$ is the following tensor-product state:
\begin{equation}
    \omega \coloneqq \sigma \otimes \psi,
    \label{eq:omega-state-def}
\end{equation}
and the Lindbladian to be simulated, given by \eqref{eq:lindbladmaster2-H}, can now be rewritten as follows:
\begin{equation}\label{eq:lindbladmaster2}
    \mathcal{L}(\rho) \coloneqq -i [\sigma, \rho] +  L\rho L^{\dagger} - \frac{1}{2} \left \{L^{\dagger}L, \rho \right\}.
\end{equation}

We are now in a position to propose a quantum algorithm for simulating the quantum channel $e^{\mathcal{L}t}$, corresponding to the Lindbladian in \eqref{eq:lindbladmaster2}, up to error $\varepsilon$ in diamond distance, using $n$ copies of the program state $\omega$. As before, for providing an analysis related to the diamond distance, let $\rho \in \mathcal{D}(\mathcal{H}_{R} \otimes \mathcal{H}_{S})$ be an unknown quantum state given as input over the joint system $RS$, where the system $R$ acts as a reference system. Furthermore, let the $k^{\text{th}}$ copy of the program state $\omega$ be a quantum state of a joint system $H_{k}P_{k}Q_{k}$, where $\sigma \in \mathcal{D}(\mathcal{H}_{H_k})$ and $\psi \in \mathcal{D}(\mathcal{H}_{P_k} \otimes \mathcal{H}_{Q_k})$. For brevity, let us use $(HPQ)_{k}$ as a shorthand for~$H_{k}P_{k}Q_{k}$.

\textbf{Algorithm 2 ---} Set $n\in \mathbb{N}$, with a particular choice specified later. Set $k=1$. Given the $k^{\text{th}}$ copy of $\omega$, i.e., $\omega_{(HPQ)_{k}}$, perform the following two steps:
\begin{enumerate}
    \item Evolve the joint quantum state $\rho_{RS}\otimes \omega_{(HPQ)_{k}}$ according to the dynamics realized by the following Lindbladian $\mathcal{M}$, for some small duration of time $\Delta = t/n $:
    \begin{multline}\label{eq:general-fixed-L-master}
        \mathcal{M}\left (\rho_{RS} \otimes \omega_{(HPQ)_{k}}\right) \coloneqq -i [\hat{H}, \rho_{RS} \otimes  \omega_{(HPQ)_{k}} ] \\ + M(\rho_{RS} \otimes  \omega_{(HPQ)_{k}}) M^{\dagger} - \frac{1}{2} \left \{M^{\dagger}M, \rho_{RS} \otimes  \omega_{(HPQ)_{k}} \right\}.
    \end{multline}
 The Hamiltonian $\hat{H}$ and Lindblad operator~$M$ act on the joint system $RS(HPQ)_{k}$, and we define them as
    \begin{align}\label{eq:H+L-H}
        \hat{H} & \coloneqq (\mathsf{SWAP}_{SH_{k}} \otimes I),\\
        M & \coloneqq \frac{1}{\sqrt{d}} \left(I \otimes |\Gamma\rangle\!\langle\Gamma |_{P_{k}Q_{k}}\right) \left( \mathsf{SWAP}_{SP_{k}} \otimes I\right)\label{eq:H+L-L}.
    \end{align}
    Here, we apply the identity operator $I$ on all those systems that are not explicitly mentioned.
    \item Trace out the program states, i.e., the systems $(HPQ)_{k}$.
\end{enumerate}
We repeat the above procedure for each copy of $\omega$, i.e., for all $k$ ranging from 1 to $n$. As before, we have some flexibility in choosing $M$, as mentioned in Remark~\ref{rem:M-choice}.

The following theorem states that the above algorithm uses $n = O(t^2/\varepsilon)$ copies of $\omega$ to simulate the Lindbladian evolution of $\rho_{RS}$, according to the Lindbladian in \eqref{eq:lindbladmaster2}, for time $t$, and the resulting state is $\varepsilon$-close in normalized trace distance to the target state $(\mathcal{I}_{R} \otimes e^{\mathcal{L}t})(\rho_{RS})$, for every input state~$\rho_{RS}$.

\begin{theorem}\label{thm:single-op-plus-H}
 Given access to $n$ copies of the program state $\omega \in  \mathcal{D}\!\left(\mathcal{H}_{HPQ}\right)$, which is defined in~\eqref{eq:omega-state-def} and encodes the Lindblad operator $L$ and the Hamiltonian $H$, there exists a quantum algorithm $\mathcal{A}$ such that the following holds:
    \begin{equation}
        \frac{1}{2}\left \Vert e^{\mathcal{L}t} - \mathcal{A} \right \Vert_{\diamond} \leq \varepsilon,
    \end{equation}
    with only $n = O(t^2/\varepsilon)$ copies of $\omega$. In other words, $\mathcal{A}$ uses only $n = O(t^2/\varepsilon)$ copies of $\omega$ to approximate the channel $ e^{\mathcal{L}t}$, defined from \eqref{eq:lindbladmaster2} and \eqref{eq:Lind-expand}, up to $\varepsilon$ error in diamond distance.
\end{theorem}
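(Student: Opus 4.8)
The plan is to follow the proof of Theorem~\ref{thm:single-op} almost verbatim, the only genuinely new ingredient being the Hamiltonian term. As before I would suppress the reference system $R$ in the sketch — it enters only through a trivial tensor factor, so uniformity over $R$, hence the passage from trace distance to diamond distance, is automatic — and analyze a single iteration of Algorithm~2. Taylor-expanding the channel $e^{\mathcal{M}\Delta}$ generated by \eqref{eq:general-fixed-L-master} at $\Delta=0$ gives
\begin{equation}
\operatorname{Tr}_{(HPQ)_k}\!\left[e^{\mathcal{M}\Delta}(\rho\otimes\omega)\right] = \rho + \operatorname{Tr}_{(HPQ)_k}\!\left[\mathcal{M}(\rho\otimes\omega)\right]\Delta + O(\Delta^2),
\end{equation}
so everything reduces to establishing the key identity $\operatorname{Tr}_{(HPQ)_k}[\mathcal{M}(\rho\otimes\omega)] = \mathcal{L}(\rho)$ with $\mathcal{L}$ as in \eqref{eq:lindbladmaster2}.

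Since $\mathcal{M}$ in \eqref{eq:general-fixed-L-master} splits into a Hamiltonian piece $-i[\hat H,\cdot]$ and a dissipator piece built from $M$, and the partial trace is linear, I would treat the two pieces separately. For the dissipator piece, the operator $M$ in \eqref{eq:H+L-L} is exactly the operator of \eqref{eq:orig-M-def} tensored with $I_{H_k}$ and acts trivially on $H_k$; hence $\sigma_{H_k}$ factors out of $M(\rho\otimes\omega)M^\dagger$ and of $\{M^\dagger M,\rho\otimes\omega\}$ and traces to $1$, after which Lemma~\ref{lemma:key-lemma} applies without change to give $\operatorname{Tr}_{(HPQ)_k}[M(\rho\otimes\omega)M^\dagger - \tfrac{1}{2}\{M^\dagger M,\rho\otimes\omega\}] = L\rho L^\dagger - \tfrac{1}{2}\{L^\dagger L,\rho\}$. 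For the Hamiltonian piece, $\hat H = \mathsf{SWAP}_{SH_k}$ is Hermitian and acts trivially on $P_kQ_k$, so $\psi_{P_kQ_k}$ factors out and traces to $1$, leaving $-i\operatorname{Tr}_{H_k}[\mathsf{SWAP}_{SH_k}(\rho_S\otimes\sigma_{H_k}) - (\rho_S\otimes\sigma_{H_k})\mathsf{SWAP}_{SH_k}]$. Using the standard identities $\operatorname{Tr}_2[\mathsf{SWAP}(A\otimes B)] = BA$ and $\operatorname{Tr}_2[(A\otimes B)\mathsf{SWAP}] = AB$ — the computational core of density matrix exponentiation, which can also be read off a tensor-network diagram in the spirit of Figures~\ref{fig:lindblad-first-term-no-H}--\ref{fig:lindblad-third-term-no-H} — this equals $-i(\sigma\rho - \rho\sigma) = -i[\sigma,\rho]$. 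Adding the two contributions yields $\operatorname{Tr}_{(HPQ)_k}[\mathcal{M}(\rho\otimes\omega)] = -i[\sigma,\rho] + L\rho L^\dagger - \tfrac{1}{2}\{L^\dagger L,\rho\} = \mathcal{L}(\rho)$, so one iteration of Algorithm~2 implements $e^{\mathcal{L}\Delta} + O(\Delta^2)$.

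Finally I would set $\Delta = t/n$, compose the $n$ iterations (each consuming a fresh copy of $\omega$), and bound the accumulated error: since each iteration is a quantum channel and composition with a channel is non-expansive in diamond distance, a telescoping argument gives $\tfrac{1}{2}\|\mathcal{A} - e^{\mathcal{L}t}\|_\diamond \le n\cdot O(\Delta^2) = O(t^2/n)$, so $n = O(t^2/\varepsilon)$ suffices; the detailed version is deferred to Appendix~\ref{app:error-analysis}. The main obstacle is precisely this last step: making the $O(\Delta^2)$ remainder explicit and uniform, i.e.\ bounding $\|e^{\mathcal{M}\Delta} - (\mathcal{I} + \mathcal{M}\Delta)\|_\diamond$ and the error of replacing $e^{\mathcal{L}\Delta}$ by $\mathcal{I} + \mathcal{L}\Delta$ by quantities depending only on $\|\mathcal{M}\|_\diamond$, $\|\mathcal{L}\|_\diamond$, and $\Delta$ (both superoperators are bounded, since $\|\hat H\| = 1$ and $M$ has bounded operator norm), and then confirming that these per-step errors genuinely add up rather than compound. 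Everything else is a routine adaptation of the no-Hamiltonian analysis of Theorem~\ref{thm:single-op}, with the Hamiltonian term being essentially density matrix exponentiation grafted onto Algorithm~1.
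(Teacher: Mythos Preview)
Your proposal is correct and follows essentially the same approach as the paper: split $\mathcal{M}$ into its Hamiltonian and dissipator parts, use the trivial action on the complementary registers to factor out $\sigma$ or $\psi$, invoke the $\mathsf{SWAP}$ partial-trace identities (Figure~\ref{fig:lindblad-first-term-with-H}) for the Hamiltonian piece and Lemma~\ref{lemma:key-lemma} for the dissipator piece, and then defer the accumulated-error bound to Appendix~\ref{app:error-analysis}. The only cosmetic difference is that you phrase the error accumulation as a telescoping/non-expansiveness argument, whereas the paper's Appendix~\ref{app:error-analysis} tracks the $\Delta^2$ remainder inductively; both yield the same $O(n\Delta^2)=O(t^2/n)$ bound.
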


 \begin{proof}
The proof style is very similar to that of Theorem~\ref{thm:single-op}. For clarity, we remove the system labels here as well.

\begin{figure}[t]
    \centering
    \includegraphics[scale=0.2]{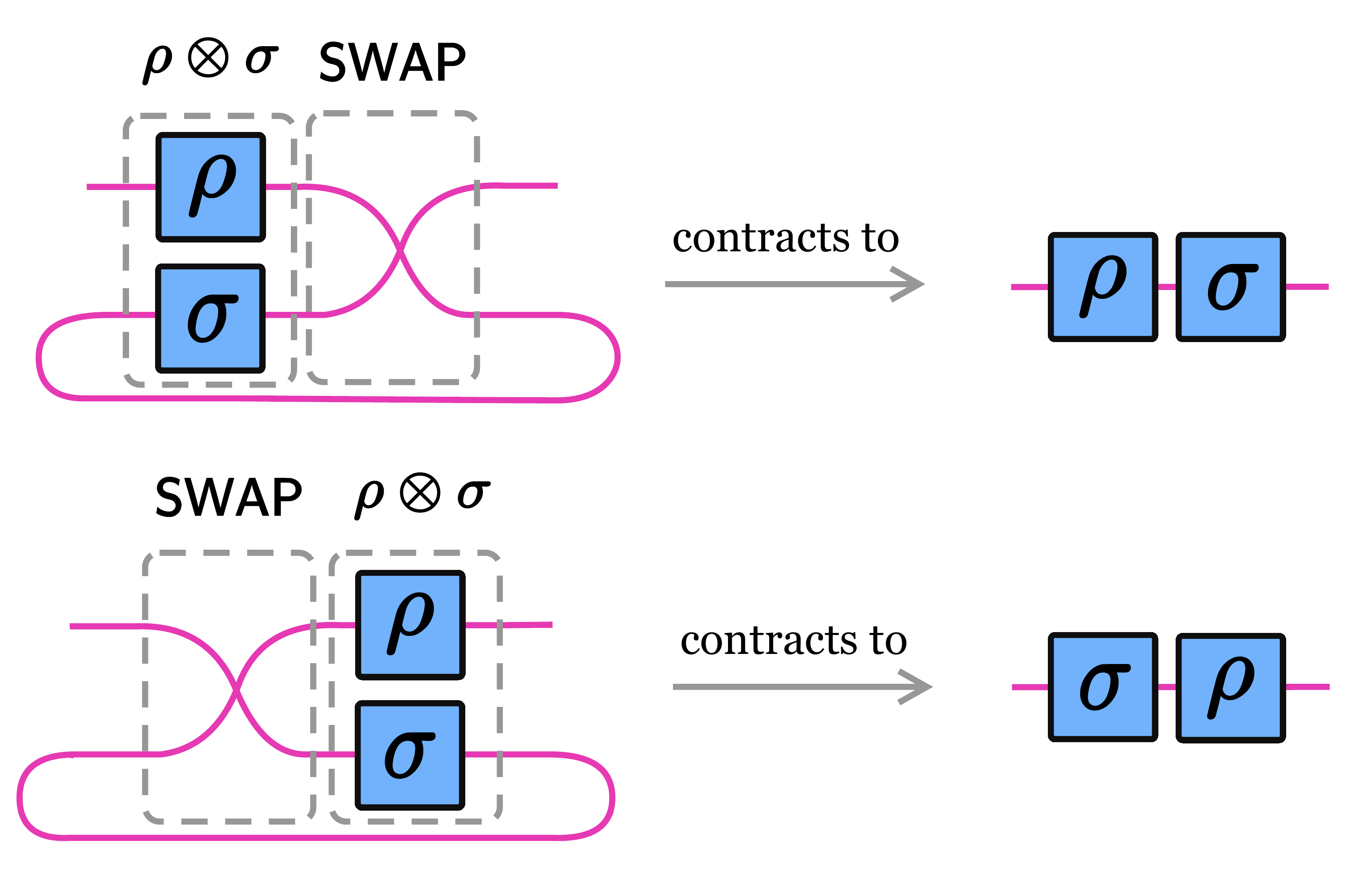}
    \caption{Tensor-network diagrams of $\operatorname{Tr}_{2} \left [\mathsf{SWAP} (\rho \otimes \sigma)\right]$ and $\operatorname{Tr}_{2} \left [(\rho \otimes \sigma) \mathsf{SWAP} \right]$. The networks on the left contract to the networks on the right due to the partial trace operation over the $2^{\text{nd}}$ systems. Simply follow the pink line from the left of the first system to its right to observe this.}
    \label{fig:lindblad-first-term-with-H}
\end{figure}

In the first step of Algorithm 2, we simulate the Lindbladian evolution of $\rho \otimes \omega$, given by \eqref{eq:lindbladmaster2}, with the Lindbladian $\mathcal{M}$ in \eqref{eq:general-fixed-L-master} for some small duration of time $\Delta$, and then trace out $\omega$. The output state obtained after this step is
\begin{equation}\label{eq:single-op-proof-sketch2}
    \operatorname{Tr}_{234}\!\left[e^{\mathcal{M}\Delta}(\rho \otimes \omega)\right] = \rho + \operatorname{Tr}_{234}\left[\mathcal{M}(\rho \otimes \omega) \right] \Delta + O(\Delta^2),
\end{equation}
where we have used the expansion in \eqref{eq:Lind-expand}.
Next we rewrite the second term in the following way:
\begin{multline}
    \operatorname{Tr}_{234}\!\left[\mathcal{M}\left (\rho \otimes \omega \right) \right ] = -i \operatorname{Tr}_{234}\!\left[ [\hat{H}, \rho \otimes  \omega ]\right] + \operatorname{Tr}_{234}\!\left [M(\rho \otimes \omega) M^{\dagger} \right]\\  - \frac{1}{2}\operatorname{Tr}_{234}\!
    \left [M^{\dagger}M \left(\rho \otimes  \omega\right) \right] - \frac{1}{2}\operatorname{Tr}_{234}\!
    \left [\left(\rho \otimes  \omega\right)M^{\dagger}M  \right].
\end{multline}
Observe that the operator $M$ can be written as $M = M' \otimes I_{2}$, where $I_{2}$ is the identity acting on the second register. From \eqref{eq:H+L-H} and \eqref{eq:H+L-L} and the fact that $\omega = \sigma \otimes \psi$, we simplify the right-hand side of the above equation:
\begin{multline}
    \operatorname{Tr}_{234}\!\left[\mathcal{M}\left (\rho \otimes \phi \right) \right ] = -i \operatorname{Tr}_{2}\!\left[ [\mathsf{SWAP}, \rho \otimes  \sigma ]\right] + \operatorname{Tr}_{34}\!\left [M'(\rho \otimes \psi) M'^{\dagger} \right]\\  - \frac{1}{2}\operatorname{Tr}_{34}\!
    \left [M'^{\dagger}M' \left(\rho \otimes  \psi\right) \right] - \frac{1}{2}\operatorname{Tr}_{34}\!
    \left [\left(\rho \otimes  \psi\right)M'^{\dagger}M'  \right].
\end{multline}
We simplify the first term even more by employing its related tensor-network diagrams, as illustrated in Figure \ref{fig:lindblad-first-term-with-H}. As before, please refer to Figures~\ref{fig:lindblad-first-term-no-H}, \ref{fig:lindblad-second-term-no-H}, and~\ref{fig:lindblad-third-term-no-H} to simplify the remaining terms (please keep in mind that $M$ is simply $M'$ when referring to these figures). We finally obtain the following equalities:
\begin{align}
-i \operatorname{Tr}_{2}\!\left[ [\mathsf{SWAP}, \rho \otimes  \sigma) ]\right] & = -i  [\sigma, \rho ],\\
    \operatorname{Tr}_{34}\!\left [M'(\rho \otimes \psi) M'^{\dagger} \right] & = L\rho L^{\dagger},\\
    \operatorname{Tr}_{34}\!
    \left [M'^{\dagger}M' \left(\rho \otimes  \psi \right) \right] & =  L^{\dagger}L \rho ,\\
    \operatorname{Tr}_{34}\!
    \left [\left(\rho \otimes  \psi\right)M'^{\dagger}M'  \right] & =   \rho L^{\dagger}L.
\end{align}%
%
From the above equations and \eqref{eq:lindbladmaster2}, we rewrite \eqref{eq:single-op-proof-sketch2} as
\begin{align}
\operatorname{Tr}_{234}\!\left[e^{\mathcal{M}\Delta}(\rho \otimes \omega)\right] & = \rho + \operatorname{Tr}_{234}\!\left[\mathcal{M}(\rho \otimes \omega) \right] \Delta + O(\Delta^2)\\
    & = \rho + \mathcal{L}(\rho) \Delta + O(\Delta^2)\\
    & = e^{\mathcal{L}\Delta}(\rho) + O(\Delta^2).
\end{align}

Substituting $\Delta = t/n$ and repeating Algorithm 2 for $n = O(t^2 / \varepsilon)$ times produces a quantum state that is $O(\varepsilon)$-close to the target  state~$e^{\mathcal{L}t}(\rho)$. A detailed error analysis of this claim goes along the lines of that provided in Appendix~\ref{app:error-analysis}.
\end{proof}

\section{Conclusion and Open Problems}

In this paper, we proposed a quantum algorithm for approximately simulating Lindblad evolution to arbitrary accuracy. For the purpose of this paper, we considered a simple case in which the Lindbladian consists of only one Lindblad operator, as this case can easily be extended to the more general case with multiple Lindblad operators. We further investigated the sample complexity of our algorithm for this case, i.e., the number of samples of the program state needed by our algorithm to achieve the desired accuracy. We then extended the single-operator case to include a Hamiltonian term in the Lindbladian, and we proposed a quantum algorithm for this case as well.

Here we list some directions for future work:
\begin{itemize}
\item Is there is an efficient implementation of Step 1 of Algorithm~1? Solving this problem will undoubtedly resolve the time or gate complexities of our algorithms. 

\item Another direction is to provide an extension to more complex cases, in which a Lindblad operator can be expressed as a linear combination or a polynomial of the operators encoded in the program states. This direction was considered in \cite[Section~5]{Kimmel2017HamiltonianComplexity} in the case of density matrix exponentiation / sample-based Hamiltonian simulation. 

\item In \cite{Kimmel2017HamiltonianComplexity}, the authors investigated a modified version of the Hamiltonian simulation problem, which they called sample-based Hamiltonian simulation. In this problem, given an unknown quantum state $\rho$ and $n$~copies of the program state $\sigma$, the task is to implement the following transformation:
\begin{equation}\label{eq:sample-based-hamil}
    \rho \otimes \underbrace{\sigma \otimes \cdots \otimes \sigma}_{\text{$n$ times}} \overset{\overset{\varepsilon}{\approx}}{\longrightarrow} e^{-i\sigma t}\rho e^{i\sigma t}.
\end{equation}
The authors reported in \cite[Theorem~5]{Kimmel2017HamiltonianComplexity} that a quantum algorithm for this task needs at least $\Omega(t^2/\varepsilon)$ copies of $\sigma$ to simulate the above channel within $\varepsilon$ accuracy. It is an important open question to determine the sample complexity of the general task of sample-based Lindbladian simulation, which could help determine if Algorithm~1 has optimal sample complexity. 

\item Furthermore, one can investigate the number of samples needed to perform state tomography of the program state and approximately recover the entire Lindblad operator encoded in this state. Then, we can simply use this operator to approximately simulate the corresponding Lindbladian evolution. One crucial question here is to ask if the sample complexity needed for state tomography is larger than the sample complexity of wave matrix Lindbladization. If this turns out to be the case, then it implies that a party can send sufficiently many copies of the program state to another party to simulate the respective Lindbladian evolution without revealing anything about the encoded Lindblad operator. 
One can think of this as some instance of quantum copy-protection, as introduced in~\cite{Aaronson2009}, in which the quantum operation is the Lindbladian evolution of a quantum state. 
This may also address the question raised in \cite{Kimmel2017HamiltonianComplexity}, which asks whether there is a quantum operation other than Hamiltonian simulation that can be encoded in quantum states and executed without revealing much about the quantum operation itself.

\end{itemize}

\section*{Acknowledgements}

DP and MMW thank Thomas Hollinger, Nana Liu, Māris Ozols, Aby Philip, Marina Radulaski, Vishal Singh, and Ewin Tang for insightful discussions. MMW is especially grateful to Prof.~Ingemar Bengtsson for the opportunity to have met and discussed research with Prof.~Lindblad in Stockholm, Sweden, during April~2019.

\appendix

\section{Proof of the Key Lemma}

\label{app:key-lemma}

\begin{lemma}\label{lemma:key-lemma}
    Let $\mathcal{H}_{1},\mathcal{H}_{2}$, and $\mathcal{H}_{3}$ be $d$-dimensional Hilbert spaces. Let $M$ be a linear operator acting on $\mathcal{H}_{1} \otimes \mathcal{H}_{2} \otimes \mathcal{H}_{3}$ and defined as
    \begin{equation}
    M_{123}\coloneqq \frac{1}{\sqrt{d}}\left(  I_{1}\otimes |\Gamma\rangle\! \langle \Gamma|_{23}\right)  \left(  \mathsf{SWAP}_{12}\otimes
    I_{3}\right).
    \end{equation}
    Also, let $\rho$ be a quantum state in $\mathcal{D}(\mathcal{H}_{1})$, and let $|\psi^{L}\rangle$ be a pure quantum state in $\mathcal{H}_{2}\otimes \mathcal{H}_{3}$ defined as $|\psi^{L}\rangle_{23}\coloneqq \left(  L_{2}\otimes I_{3}\right)  |\Gamma\rangle_{23},$ where $L$ is a $d \times d$-dimensional linear operator such that $\left\Vert L \right\Vert_{2} = 1$.
    Then the following identities hold:
    \begin{gather}\label{eq:lemma-first-identity}
        \operatorname{Tr}_{23}[M_{123}\left(  \rho_{1}\otimes\psi_{23}^{L}\right)
M_{123}^{\dag}] = \left[  L\rho L^{\dag}\right]  _{1},\\
\label{eq:lemma-second-identity}
\operatorname{Tr}_{23}[M_{123}^{\dag}M_{123}\left(  \rho_{1}\otimes
\psi_{23}^{L}\right)  ] = \left[  L^{\dag}L\rho\right]_{1},\\
\label{eq:lemma-third-identity}
\operatorname{Tr}_{23}[\left(  \rho_{1}\otimes\psi_{23}^{L}\right)
M_{123}^{\dag}M_{123}]=\left[  \rho L^{\dag}L\right]  _{1},
    \end{gather}
where we have used the shorthand $\psi_{23}^{L}\equiv|\psi^{L}\rangle\!\langle
\psi^{L}|_{23}$.
\end{lemma}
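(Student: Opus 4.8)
The three identities are all of the form ``sandwich $\rho\otimes\psi^L$ between copies of $M$ (or $M^\dagger$), then trace out systems $2$ and $3$,'' so the plan is to first compute $M$, $M^\dagger$, and $M^\dagger M$ explicitly in Dirac notation, and then substitute. Writing $|\Gamma\rangle = \sum_j |j\rangle|j\rangle$ and $\mathsf{SWAP}_{12} = \sum_{i,j}|i\rangle\!\langle j|\otimes|j\rangle\!\langle i|$, I would expand
\begin{equation}
M_{123} = \frac{1}{\sqrt d}\sum_{j,i,i'} |i'\rangle_1\langle i'|j\rangle \otimes |j\rangle_2\langle i'|_2 \otimes |j\rangle_3\langle i|_3 \,(\cdots),
\end{equation}
but more cleanly: apply $M_{123}$ to a product vector $|\phi\rangle_1|\alpha\rangle_2|\beta\rangle_3$. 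First $\mathsf{SWAP}_{12}$ sends this to $|\alpha\rangle_1|\phi\rangle_2|\beta\rangle_3$, and then $I_1\otimes|\Gamma\rangle\!\langle\Gamma|_{23}$ produces $\langle\Gamma|\phi\rangle_2|\beta\rangle_3$ times $|\alpha\rangle_1|\Gamma\rangle_{23}$; using $\langle\Gamma|_{23}(|\phi\rangle_2|\beta\rangle_3) = \langle\phi^*|\beta\rangle$ (transpose in the chosen basis), this gives $\frac{1}{\sqrt d}\langle \phi^*|\beta\rangle\,|\alpha\rangle_1\otimes|\Gamma\rangle_{23}$. The upshot is the operator identity
\begin{equation}
M_{123} = \frac{1}{\sqrt d}\,|\Gamma\rangle\!\langle\Gamma|_{23}\,\mathsf{SWAP}_{12} \quad\Longrightarrow\quad M_{123} = \frac{1}{\sqrt d}\bigl(I_1\otimes|\Gamma\rangle_{23}\bigr)\bigl(\langle\Gamma|_{23}\mathsf{SWAP}_{12}\otimes I_3'\bigr),
\end{equation}
which makes the $|\Gamma\rangle_{23}$ ``output leg'' manifest; this is exactly what the tensor-network figures encode.

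\textbf{Main computation.} For the first identity, I would use that $M(\rho_1\otimes\psi^L_{23})M^\dagger$ has $|\Gamma\rangle\!\langle\Gamma|_{23}$ on both sides, so $\operatorname{Tr}_{23}$ hits $\langle\Gamma|_{23}(\,\cdots\,)|\Gamma\rangle_{23}$-type contractions. Concretely, writing $M = \frac{1}{\sqrt d}(I_1\otimes|\Gamma\rangle_{23}) N$ where $N \coloneqq \langle\Gamma|_{23}\mathsf{SWAP}_{12}$ is an operator from $\mathcal{H}_1\otimes\mathcal{H}_2$ into $\mathcal{H}_1$ (with a spectator $\mathcal{H}_3$), one gets
\begin{equation}
\operatorname{Tr}_{23}\bigl[M(\rho_1\otimes\psi^L_{23})M^\dagger\bigr] = \frac{1}{d}\,\langle\Gamma|\Gamma\rangle\; N(\rho_1\otimes\psi^L_{23})N^\dagger = N(\rho_1\otimes\psi^L_{23})N^\dagger,
\end{equation}
since $\langle\Gamma|\Gamma\rangle = d$. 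Then I would compute $N$ acting on $\rho_1\otimes\psi^L_{23}$: the $\mathsf{SWAP}_{12}$ moves register $1$'s content to register $2$ and vice versa, and $\langle\Gamma|_{23}$ contracts register $2$ with register $3$; using $\langle\Gamma|_{23}(A_2\otimes B_3) = $ (a transpose-type contraction) together with $|\psi^L\rangle_{23} = (L_2\otimes I_3)|\Gamma\rangle_{23}$ and the identity $(X\otimes I)|\Gamma\rangle = (I\otimes X^T)|\Gamma\rangle$, everything collapses so that $N(\rho_1\otimes\psi^L_{23})$ leaves $L\rho$ on register $1$, and then $N^\dagger$ on the right contributes $L^\dagger$, giving $[L\rho L^\dagger]_1$. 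For the second and third identities I would first compute $M^\dagger M$: since $M = \frac{1}{\sqrt d}(I_1\otimes|\Gamma\rangle_{23})N$ we get $M^\dagger M = \frac{1}{d}N^\dagger(\langle\Gamma|\Gamma\rangle_{23}) N = N^\dagger N$, and $N^\dagger N$ acting on $\mathcal{H}_1\otimes\mathcal{H}_2\otimes\mathcal{H}_3$ is a product of a $\mathsf{SWAP}$, the contraction $|\Gamma\rangle\!\langle\Gamma|_{23}$, and another $\mathsf{SWAP}$; tracing against $\rho_1\otimes\psi^L_{23}$ and using $\operatorname{Tr}[\psi^L] = \|L\|_2^2 = 1$ together with the same $(L\otimes I)|\Gamma\rangle$ manipulation yields $[L^\dagger L\rho]_1$ and $[\rho L^\dagger L]_1$ respectively.

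\textbf{Expected obstacle.} The only delicate point is bookkeeping the basis-dependent contractions $\langle\Gamma|_{23}$, i.e., keeping track of where a transpose $L^T$ versus $L$ appears, and making sure the normalization constraint $\|L\|_2^2 = \operatorname{Tr}[L^\dagger L] = 1$ is used in exactly the right place (it is what makes $\psi^L$ a genuine state and ensures the anticommutator terms come out with the correct coefficient). I would double-check these by contracting the corresponding tensor-network diagrams in Figures~\ref{fig:lindblad-first-term-no-H}, \ref{fig:lindblad-subroutine-no-H}, \ref{fig:lindblad-second-term-no-H}, and~\ref{fig:lindblad-third-term-no-H}, following the pink wire from the left of system~$1$ to its right, which is exactly the visual encoding of the algebraic cancellations above. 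No genuinely hard step is anticipated; the lemma is a careful but routine tensor contraction.
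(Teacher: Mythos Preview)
Your plan is correct and is essentially the same direct tensor-contraction argument the paper gives: the paper expands $M$, $M^\dagger$, and $M^\dagger M$ fully in index notation and contracts, while you organize the same computation via the factorization $M=\tfrac{1}{\sqrt d}(I_1\otimes|\Gamma\rangle_{23})\,N$ with $N=(I_1\otimes\langle\Gamma|_{23})(\mathsf{SWAP}_{12}\otimes I_3)$, which is exactly what the tensor-network figures encode. One small correction to your ``expected obstacle'': the constraint $\|L\|_2=1$ is never actually used in proving the three identities (they hold algebraically for arbitrary $L$); it only guarantees that $\psi^L$ is a bona fide state, so you will not find a step where it enters.
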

 \begin{proof}
Recall that
\begin{gather}
    |\Gamma\rangle\! \langle \Gamma|_{23} = \sum_{i,j}|i\rangle\! \langle j|_{2} \otimes|i\rangle\! \langle j|_{3},\\
    \mathsf{SWAP}_{12} = \sum_{k,\ell}|k\rangle\!
\langle\ell|_{1}\otimes|\ell\rangle\! \langle k|_{2}.
\end{gather}
Using the above equalites, observe that
\begin{align}
M_{123}  &  = \frac{1}{\sqrt{d}} \left(  I_{1}\otimes|\Gamma\rangle\! \langle \Gamma|_{23} \right)  \left(  \mathsf{SWAP}_{12}\otimes
I_{3}\right) \\
&  = \frac{1}{\sqrt{d}} \left(  I_{1}\otimes\sum_{i,j}|i\rangle\! \langle j|_{2}%
\otimes|i\rangle\! \langle j|_{3}\right)  \left(  \sum_{k,\ell}|k\rangle\!
\langle\ell|_{1}\otimes|\ell\rangle\! \langle k|_{2}\otimes I_{3}\right) \\
&  =\frac{1}{\sqrt{d}}\sum_{i,j,k,\ell}\left(  I_{1}\otimes|i\rangle\! \langle
j|_{2}\otimes|i\rangle\! \langle j|_{3}\right)  \left(  |k\rangle\! \langle
\ell|_{1}\otimes|\ell\rangle\! \langle k|_{2}\otimes I_{3}\right) \\
&  =\frac{1}{\sqrt{d}}\sum_{i,j,k,\ell}|k\rangle\! \langle\ell|_{1}\otimes
|i\rangle\!\langle j|\ell\rangle\! \langle k|_{2}\otimes|i\rangle\! \langle
j|_{3}\\
&  =\frac{1}{\sqrt{d}}\sum_{i,j,k}|k\rangle\! \langle j|_{1}\otimes|i\rangle\! \langle
k|_{2}\otimes|i\rangle\! \langle j|_{3}.%
\end{align}
Let us prove the first identity, i.e., that in \eqref{eq:lemma-first-identity}, as follows:
\begin{align}
&  \operatorname{Tr}_{23}\!\left[M_{123}\left(  \rho_{1}\otimes\psi_{23}^{L}\right)
M_{123}^{\dag}\right]\nonumber\\
&  =\operatorname{Tr}_{23}\Biggl[  \left(  \frac{1}{\sqrt{d}}\sum_{i,j,k}%
|k\rangle\! \langle j|_{1}\otimes|i\rangle\! \langle k|_{2}\otimes
|i\rangle\! \langle j|_{3}\right)  \left(  \rho_{1}\otimes\psi_{23}^{L}\right) \notag \times \\
& \hspace{4cm} \left(  \frac{1}{\sqrt{d}}\sum_{i^{\prime},j^{\prime},k^{\prime}}|j^{\prime}%
\rangle\! \langle k^{\prime}|_{1}\otimes|k^{\prime}\rangle\! \langle i^{\prime
}|_{2}\otimes|j^{\prime}\rangle\! \langle i^{\prime}|_{3}\right)  \Biggr]  \\
&  =\frac{1}{d}\sum_{i,j,k,i^{\prime},j^{\prime},k^{\prime}}%
\operatorname{Tr}_{23} \big[ \left(  |k\rangle\! \langle j|_{1}\otimes
|i\rangle\! \langle k|_{2}\otimes|i\rangle\! \langle j|_{3}\right)  \left(
\rho_{1}\otimes\psi_{23}^{L}\right) \times \notag \\
& \hspace{6cm} \left(  |j^{\prime}\rangle\! \langle
k^{\prime}|_{1}\otimes|k^{\prime}\rangle\! \langle i^{\prime}|_{2}%
\otimes|j^{\prime}\rangle\! \langle i^{\prime}|_{3}\right) \big]  \\ 
&  =\frac{1}{d}\sum_{i,j,k,i^{\prime},j^{\prime},k^{\prime}}%
|k\rangle\!\langle j|_{1}\rho_{1}|j^{\prime}\rangle\! \langle k^{\prime}%
|_{1}\operatorname{Tr}_{23}\!\left[  \left(  |i\rangle\! \langle k|_{2}%
\otimes|i\rangle\! \langle j|_{3}\right)  \psi_{23}^{L}\left(  |k^{\prime
}\rangle\! \langle i^{\prime}|_{2}\otimes|j^{\prime}\rangle\! \langle i^{\prime
}|_{3}\right)  \right]  \\
&  =\frac{1}{d}\sum_{i,j,k,i^{\prime},j^{\prime},k^{\prime}}%
|k\rangle\!\langle j|_{1}\rho_{1}|j^{\prime}\rangle\! \langle k^{\prime}%
|_{1}\ \left(  \langle k|_{2}\otimes\langle j|_{3}\right)  \psi_{23}%
^{L}\left(  |k^{\prime}\rangle_{2}\otimes|j^{\prime}\rangle_{3}\right)
\langle i^{\prime}|i\rangle_{2}\langle i^{\prime}|i\rangle_{3}\\
&  =\sum_{j,k,j^{\prime},k^{\prime}}|k\rangle\! \langle j|_{1}%
\rho_{1}|j^{\prime}\rangle\! \langle k^{\prime}|_{1}\ \left(  \langle
k|_{2}\otimes\langle j|_{3}\right)  \psi_{23}^{L}\left(  |k^{\prime}%
\rangle_{2}\otimes|j^{\prime}\rangle_{3}\right)  \\
&  =\sum_{j,k,j^{\prime},k^{\prime}}|k\rangle\! \langle j|_{1}%
\rho_{1}|j^{\prime}\rangle\! \langle k^{\prime}|_{1}\ \left(  \langle
k|_{2}\otimes\langle j|_{3}\right)  \left(  L_{2}\otimes I_{3}\right)
|\Gamma\rangle\! \langle\Gamma|_{23}\left(  L_{2}^{\dag}\otimes I_{3}\right)
 \notag \times \\
 & \hspace{8cm}\left(  |k^{\prime}\rangle_{2}\otimes|j^{\prime}\rangle_{3}\right)  \\
&  =\sum_{j,k,j^{\prime},k^{\prime},i,i^{\prime}}|k\rangle\! \langle
j|_{1}\rho_{1}|j^{\prime}\rangle\! \langle k^{\prime}|_{1}\ \left(  \langle
k|_{2}\otimes\langle j|_{3}\right)  \left(  L_{2}|i\rangle\! \langle i^{\prime
}|_{2}L_{2}^{\dag}\otimes|i\rangle\! \langle i^{\prime}|_{3}\right) \notag \times \\
& \hspace{8cm}\left(
|k^{\prime}\rangle_{2}\otimes|j^{\prime}\rangle_{3}\right)  \\
&  =\sum_{j,k,j^{\prime},k^{\prime},i,i^{\prime}}|k\rangle\! \langle
j|\rho|j^{\prime}\rangle\! \langle k^{\prime}|_{1}\ \langle k|L|i\rangle
\!\langle i^{\prime}|L^{\dag}|k^{\prime}\rangle\! \langle j|i\rangle\! \langle
i^{\prime}|j^{\prime}\rangle\\ 
&  =\sum_{j,k,j^{\prime},k^{\prime}}|k\rangle\! \langle
j|\rho|j^{\prime}\rangle\! \langle k^{\prime}|_{1}\ \langle k|L|j\rangle
\!\langle j^{\prime}|L^{\dag}|k^{\prime}\rangle\\
&  =\sum_{j,k,j^{\prime},k^{\prime}}|k\rangle\! \langle
k|L|j\rangle\!\langle j|\rho|j^{\prime}\rangle\! \langle j^{\prime}|L^{\dag
}|k^{\prime}\rangle\! \langle k^{\prime}|_{1}\\
&  =\left[  L\rho L^{\dag}\right]  _{1}.
\end{align}
Now consider that
\begin{align}
M_{123}^{\dag}M_{123} &  =\left(  \frac{1}{\sqrt{d}}\sum_{i^{\prime},j^{\prime
},k^{\prime}}|j^{\prime}\rangle\! \langle k^{\prime}|_{1}\otimes|k^{\prime
}\rangle\! \langle i^{\prime}|_{2}\otimes|j^{\prime}\rangle\! \langle i^{\prime
}|_{3}\right) \times \notag  \\
& \qquad \qquad \left(  \frac{1}{\sqrt{d}}\sum_{i,j,k}|k\rangle\! \langle j|_{1}%
\otimes|i\rangle\! \langle k|_{2}\otimes|i\rangle\! \langle j|_{3}\right)  \\
&  =\frac{1}{d}\sum_{i^{\prime},j^{\prime},k^{\prime},i,j,k}|j^{\prime
}\rangle\! \langle k^{\prime}|k\rangle\! \langle j|_{1}\otimes|k^{\prime}%
\rangle\! \langle i^{\prime}|i\rangle\! \langle k|_{2}\otimes|j^{\prime}%
\rangle\! \langle i^{\prime}|i\rangle\! \langle j|_{3}\\
&  =\sum_{j^{\prime},j,k}|j^{\prime}\rangle\! \langle j|_{1}%
\otimes|k\rangle\! \langle k|_{2}\otimes|j^{\prime}\rangle\! \langle j|_{3}\\
&  =\sum_{j^{\prime},j}|j^{\prime}\rangle\! \langle j|_{1}\otimes
I_{2}\otimes|j^{\prime}\rangle\! \langle j|_{3}\\
&  =\sum_{i,j}|i\rangle\! \langle j|_{1}\otimes I_{2}\otimes
|i\rangle\! \langle j|_{3} .
\end{align}
Then, for checking the second identity, i.e., that in \eqref{eq:lemma-second-identity}, we find that%
\begin{align}
&  \operatorname{Tr}_{23}\!\left[M_{123}^{\dag}M_{123}\left(  \rho_{1}\otimes
\psi_{23}^{L}\right)  \right]\nonumber\\
&  =\operatorname{Tr}_{23}\!\left[  \left(\sum_{i,j}|i\rangle
\!\langle j|_{1}\otimes I_{2}\otimes|i\rangle\! \langle j|_{3}\right)  \left(
\rho_{1}\otimes\psi_{23}^{L}\right)  \right]  \\
&  =\sum_{i,j}\operatorname{Tr}_{23}\!\left[  \left(  |i\rangle
\!\langle j|_{1}\otimes I_{2}\otimes|i\rangle\! \langle j|_{3}\right)  \left(
\rho_{1}\otimes\psi_{23}^{L}\right)  \right]  \\
&  =\sum_{i,j}|i\rangle\! \langle j|_{1}\rho_{1}\ \operatorname{Tr}%
_{23}\left[  \left(  I_{2}\otimes|i\rangle\! \langle j|_{3}\right)  \left(
\psi_{23}^{L}\right)  \right]  \\
&  =\sum_{i,j}|i\rangle\! \langle j|_{1}\rho_{1}\ \operatorname{Tr}%
_{23}\left[  \left(  I_{2}\otimes|i\rangle\! \langle j|_{3}\right)  \left(
L_{2}\otimes I_{3}\right)  |\Gamma\rangle\! \langle\Gamma|_{23}\left(
L_{2}^{\dag}\otimes I_{3}\right)  \right]  \\ 
&  =\sum_{i,j,k,\ell}|i\rangle\! \langle j|_{1}\rho_{1}%
\ \operatorname{Tr}_{23}\!\left[  \left(  I_{2}\otimes|i\rangle\! \langle
j|_{3}\right)  \left(  L_{2}\otimes I_{3}\right)  \left(  |k\rangle
\!\langle\ell|_{2}\otimes|k\rangle\! \langle\ell|_{3}\right)  \left(
L_{2}^{\dag}\otimes I_{3}\right)  \right]  \\
&  =\sum_{i,j,k,\ell}|i\rangle\! \langle j|_{1}\rho_{1}%
\ \operatorname{Tr}_{23}\!\left[  L_{2}|k\rangle\! \langle\ell|_{2}L_{2}^{\dag
}\otimes|i\rangle\! \langle j|k\rangle\! \langle\ell|_{3}\right]  \\
&  =\sum_{i,j,k,\ell}|i\rangle\! \langle j|_{1}\rho_{1}%
\ \operatorname{Tr}[L_{2}|k\rangle\! \langle\ell|_{2}L_{2}^{\dag}%
]\operatorname{Tr}[|i\rangle\! \langle j|k\rangle\! \langle\ell|_{3}]\\
&  =\sum_{i,j,k,\ell}|i\rangle\! \langle j|_{1}\rho_{1}\ \langle
\ell|L^{\dag}L|k\rangle\! \langle\ell|i\rangle\! \langle j|k\rangle\\
&  =\sum_{i,j}|i\rangle\! \langle j|_{1}\rho_{1}\ \langle i|L^{\dag
}L|j\rangle\\
&  =\sum_{i,j}|i\rangle\! \langle i|L^{\dag}L|j\rangle\! \langle
j|\rho\\
&  =\left[  L^{\dag}L\rho\right]  _{1}.
\end{align}
Then the third identity in \eqref{eq:lemma-third-identity} is the Hermitian conjugate of the above identity, and so we find that
\begin{equation}
\operatorname{Tr}_{23}\!\left[\left(  \rho_{1}\otimes\psi_{23}^{L}\right)
M_{123}^{\dag}M_{123}\right]=\left[  \rho L^{\dag}L\right]_{1}.
\end{equation}
This concludes the proof.
\end{proof}

\section{Proof of Theorem~\ref{thm:single-op}}

\label{app:error-analysis}

The error analysis here has some similarities with that from \cite[Appendix~B]{Kimmel2017HamiltonianComplexity}. We first expand the target state using a Taylor series expansion, as in \eqref{eq:Lind-expand}:
\begin{multline}
\label{eq:single-op-thm-proof-target}
    (\mathcal{I}_{R} \otimes e^{\mathcal{L}t})(\rho_{RS}) = \rho_{RS} + (\mathcal{I}_{R} \otimes \mathcal{L})(\rho_{RS}) t \\
    + \frac{1}{2} ((\mathcal{I}_{R} \otimes \mathcal{L})\circ (\mathcal{I}_{R} \otimes \mathcal{L}))(\rho_{RS}) t^2 + \ldots \quad.
\end{multline}
For the sake of brevity, we will no longer explicitly state the action of the identity channel $\mathcal{I}$ on the system $R$, and we leave it implicit that $\mathcal{L}$ and $L$ act on system $S$ alone. 

Next, let $\rho_{RS}^{(1)}$ be the output quantum state obtained after the first step of Algorithm~1. Formally, it can be written as follows:
\begin{align}
    \rho_{RS}^{(1)} \notag & \coloneqq \operatorname{Tr}_{P
_{1}Q_{1}}\!\left [ e^{\mathcal{M}\Delta}\left(\rho_{RS}\otimes \psi_{P_{1}Q_{1}}\right)\right ]\\
    & = \rho_{RS} 
+ \operatorname{Tr}_{P
_{1}Q_{1}}\!\left [M(\rho_{RS} \otimes  \psi_{P_{1}Q_{1}}) M^{\dagger} - \frac{1}{2} \left \{M^{\dagger}M, \rho_{RS} \otimes  \psi_{P_{1}Q_{1}} \right\} \right] \Delta \notag \\
    & \qquad + \frac{1}{2} \operatorname{Tr}_{P
_{1}Q_{1}}\!\left [(\mathcal{M}\circ \mathcal{M}) \left (\rho_{RS} \otimes  \psi_{P_{1}Q_{1}}\right)\right] \Delta^2 + O(\Delta^3),
\end{align}
where the equality follows by expanding $e^{\mathcal{M}\Delta}\left(\rho_{RS}\otimes \psi_{P_{1}Q_{1}}\right)$ at $\Delta=0$ using a Taylor series expansion, as in \eqref{eq:Lind-expand}.

We focus now on simplifying the second term above (the one proportional to~$\Delta$). We begin by expanding and rewriting it as follows:
\begin{multline}
    \operatorname{Tr}_{P_{1}Q_{1}} \!\left [M(\rho_{RS} \otimes \psi_{P_{1}Q_{1}}) M^{\dagger} \right] - \frac{1}{2}\operatorname{Tr}_{P_{1}Q_{1}}\!
    \left [M^{\dagger}M \left(\rho_{RS} \otimes  \psi_{P_{1}Q_{1}}\right) \right]\\
    - \frac{1}{2}\operatorname{Tr}_{P_{1}Q_{1}}
    \!\left [\left(\rho_{RS} \otimes  \psi_{P_{1}Q_{1}}\right)M^{\dagger}M  \right].
    \label{eq:2nd-term-to-simplify}
\end{multline}
We then invoke Lemma~\ref{lemma:key-lemma} to simplify each of the above terms. As a result of this, we obtain
\begin{align}
    \operatorname{Tr}_{P_{1}Q_{1}} \!\left [M(\rho_{RS} \otimes \psi_{P_{1}Q_{1}}) M^{\dagger} \right] & = L\rho_{RS}L^{\dagger}, \\
    \operatorname{Tr}_{P_{1}Q_{1}}\! \left [M^{\dagger}M \left(\rho_{RS} \otimes  \psi_{P_{1}Q_{1}}\right) \right] & =  L^{\dagger}L \rho_{RS} , \\
    \operatorname{Tr}_{P_{1}Q_{1}}\! \left [\left(\rho_{RS} \otimes  \psi_{P_{1}Q_{1}}\right)M^{\dagger}M  \right] & =   \rho_{RS}L^{\dagger}L.
\end{align}
For visualizing the above simplifications graphically, please refer to the tensor-network diagrams provided in Figures~\ref{fig:lindblad-first-term-no-H}, \ref{fig:lindblad-second-term-no-H}, and \ref{fig:lindblad-third-term-no-H}, respectively.

Using the above equations, the expression in \eqref{eq:2nd-term-to-simplify} above can be written as
\begin{multline}
\operatorname{Tr}_{P
_{1}Q_{1}}\!\left [M(\rho_{RS} \otimes  \psi_{P_{1}Q_{1}}) M^{\dagger} - \frac{1}{2} \left \{M^{\dagger}M, \rho_{RS} \otimes  \psi_{P_{1}Q_{1}} \right\} \right]   \\ = L\rho_{RS} L^{\dagger} - \frac{1}{2} \left \{L^{\dagger}L, \rho_{RS} \right\} = \mathcal{L}(\rho_{RS}).
\end{multline}

Therefore, the output quantum state obtained after the first iteration of Algorithm~1 is
 \begin{multline}
 \label{eq:single-rho-after-first-iter}
     \rho_{RS}^{(1)} = \rho_{RS} + \mathcal{L}(\rho_{RS}) \Delta 
     +  \frac{1}{2} \operatorname{Tr}_{P
_{1}Q_{1}}\!\left [(\mathcal{M}\circ \mathcal{M}) \left (\rho_{RS} \otimes  \psi_{P_{1}Q_{1}}\right)\right] \Delta^2 + O(\Delta^3).
 \end{multline}
For simplicity, let us use the following shorthand for the third term of~\eqref{eq:single-rho-after-first-iter} (i.e., the term proportional to $\Delta^2$):
\begin{equation}
    \mathcal{J}(\rho_{RS}) \coloneqq \operatorname{Tr}_{P
_{i}Q_{i}}\!\left [(\mathcal{M}\circ \mathcal{M}) \left (\rho_{RS} \otimes  \psi_{P_{i}Q_{i}}\right)\right] \Delta^2, \text{  for all } i\in \{1, \ldots, n\}.
\end{equation}
Substituting the above equation into \eqref{eq:single-rho-after-first-iter}, we get
\begin{equation}
    \rho_{RS}^{(1)} = \rho_{RS} + \mathcal{L}(\rho_{RS}) \Delta + \frac{1}{2}\mathcal{J}(\rho_{RS}) \Delta^2 + O(\Delta^3).
\end{equation}

Following that, we use the above development to obtain the output state after $k$ iterations of our quantum algorithm, i.e., 
\begin{equation}
    \rho^{(k)} = \rho^{(k-1)} + \mathcal{L}(\rho^{(k-1)}) \Delta +  \frac{1}{2}\mathcal{J}(\rho^{(k-1)}) \Delta^2 + O(\Delta^3),
\end{equation}
where here and in what follows we  avoid showing system labels for simplicity. 
Then considering that%
\begin{equation}
\rho^{(k-1)}=\rho^{(k-2)}+\mathcal{L}(\rho^{(k-2)})\Delta+\frac{1}%
{2}\mathcal{J}(\rho^{(k-2)})\Delta^{2}+O(\Delta^{3}),
\end{equation}
we find that%
\begin{equation}
\rho^{(k)}=\rho^{(k-2)}+\mathcal{L}(\rho^{(k-2)})2\Delta+\frac{1}%
{2}\mathcal{J}(\rho^{(k-2)})2\Delta^{2}+\mathcal{L}^{2}(\rho^{(k-2)}%
)\Delta^{2}+O(\Delta^{3}),
\end{equation}
because%
\begin{align}
& \rho^{(k)} \notag \\
&  =\rho^{(k-1)}+\mathcal{L}(\rho^{(k-1)})\Delta+\frac{1}%
{2}\mathcal{J}(\rho^{(k-1)})\Delta^{2}+O(\Delta^{3})\\
&  =\rho^{(k-2)}+\mathcal{L}(\rho^{(k-2)})\Delta+\frac{1}{2}\mathcal{J}%
(\rho^{(k-2)})\Delta^{2}+O(\Delta^{3})\nonumber\\
&  \qquad+\mathcal{L}\left(  \rho^{(k-2)}+\mathcal{L}(\rho^{(k-2)}%
)\Delta+\frac{1}{2}\mathcal{J}(\rho^{(k-2)})\Delta^{2}+O(\Delta^{3})\right)
\Delta\nonumber\\
&  \qquad+\frac{1}{2}\mathcal{J}\left(  \rho^{(k-2)}+\mathcal{L}(\rho
^{(k-2)})\Delta+\frac{1}{2}\mathcal{J}(\rho^{(k-2)})\Delta^{2}+O(\Delta
^{3})\right)  \Delta^{2}\\
&  =\rho^{(k-2)}+\mathcal{L}(\rho^{(k-2)})\Delta+\frac{1}{2}\mathcal{J}%
(\rho^{(k-2)})\Delta^{2}\nonumber\\
&  \qquad+\mathcal{L}\left(  \rho^{(k-2)}\right)  \Delta+\mathcal{L}^{2}%
(\rho^{(k-2)})\Delta^{2}+\frac{1}{2}\mathcal{J}\left(  \rho^{(k-2)}\right)
\Delta^{2}+O(\Delta^{3})\\
&  =\rho^{(k-2)}+\mathcal{L}(\rho^{(k-2)})2\Delta+\frac{1}{2}\mathcal{J}%
(\rho^{(k-2)})2\Delta^{2}+\mathcal{L}^{2}(\rho^{(k-2)})\Delta^{2}+O(\Delta
^{3}).
\end{align}
Repeating this kind of analysis several times leads to
\begin{align}
\rho^{(k)}& = \rho^{(k-1)} + \mathcal{L}(\rho^{(k-1)}) \Delta +  \frac{1}{2}\mathcal{J}(\rho^{(k-1)}) \Delta^2 + O(\Delta^3) \notag \\
& =  \rho^{(k-2)}+\mathcal{L}(\rho^{(k-2)})2\Delta+\frac{1}%
{2}\mathcal{J}(\rho^{(k-2)})2\Delta^{2}+\mathcal{L}^{2}(\rho^{(k-2)}%
)\Delta^{2}+O(\Delta^{3}) \notag  \\
& =\rho^{(k-3)}+\mathcal{L}(\rho^{(k-3)})3\Delta+\frac{1}%
{2}\mathcal{J}(\rho^{(k-3)})3\Delta^{2}+\mathcal{L}^{2}(\rho^{(k-3)}%
)3\Delta^{2}+O(\Delta^{3})\notag \\
& = \rho^{(k-4)}+\mathcal{L}(\rho^{(k-4)})4\Delta+\frac{1}%
{2}\mathcal{J}(\rho^{(k-4)})4\Delta^{2}+\mathcal{L}^{2}(\rho^{(k-4)}%
)6\Delta^{2}+O(\Delta^{3}) \notag \\
& = \rho^{(k-5)}+\mathcal{L}(\rho^{(k-5)})5\Delta+\frac{1}%
{2}\mathcal{J}(\rho^{(k-5)})5\Delta^{2}+\mathcal{L}^{2}(\rho^{(k-5)}%
)10\Delta^{2}+O(\Delta^{3}).
\label{eq:base-step-induction}
\end{align}
As such, the above establishes the base step for a proof by induction. To see the inductive step, suppose that
\begin{align}
\rho^{(k)} &  =\rho^{(k-m)}+\mathcal{L}(\rho^{(k-m)})m\Delta+\frac{1}%
{2}\mathcal{J}(\rho^{(k-m)})m\Delta^{2}\notag \\
& \qquad\qquad +\mathcal{L}^{2}(\rho^{(k-m)})\left(
1+2+\cdots+m-1\right)  \Delta^{2}+O(\Delta^{3})
\label{eq:inductive-assumption-1}\\
&  =\rho^{(k-m)}+\mathcal{L}(\rho^{(k-m)})m\Delta+\frac{1}{2}\mathcal{J}%
(\rho^{(k-m)})m\Delta^{2}\notag \\
& \qquad\qquad +\mathcal{L}^{2}(\rho^{(k-m)})\frac{m\left(
m-1\right)  }{2}\Delta^{2}+O(\Delta^{3}) \label{eq:inductive-assumption},
\end{align}
as is consistent with \eqref{eq:base-step-induction}.
Then, by plugging the following into \eqref{eq:inductive-assumption-1}
\begin{equation}
\rho^{(k-m)}=\rho^{(k-m-1)}+\mathcal{L}(\rho^{(k-m-1)})\Delta+\frac{1}%
{2}\mathcal{J}(\rho^{(k-m-1)})\Delta^{2}+O(\Delta^{3}),
\end{equation}
we find that%
\begin{align}
&  \rho^{(k)}\nonumber\\
&  =\rho^{(k-m)}+\mathcal{L}(\rho^{(k-m)})m\Delta+\frac{1}{2}\mathcal{J}%
(\rho^{(k-m)})m\Delta^{2}\nonumber\\
&  \qquad+\mathcal{L}^{2}(\rho^{(k-m)})\left(  1+\cdots+m-1\right)  \Delta
^{2}+O(\Delta^{3})\\
&  =\rho^{(k-m-1)}+\mathcal{L}(\rho^{(k-m-1)})\Delta+\frac{1}{2}%
\mathcal{J}(\rho^{(k-m-1)})\Delta^{2}+O(\Delta^{3})\nonumber\\
&  \qquad+\mathcal{L}\left(  \rho^{(k-m-1)}+\mathcal{L}(\rho^{(k-m-1)}%
)\Delta+\frac{1}{2}\mathcal{J}(\rho^{(k-m-1)})\Delta^{2}+O(\Delta^{3})\right)
m\Delta\nonumber\\
&  \qquad+\frac{1}{2}\mathcal{J}\left(  \rho^{(k-m-1)}+\mathcal{L}%
(\rho^{(k-m-1)})\Delta+\frac{1}{2}\mathcal{J}(\rho^{(k-m-1)})\Delta
^{2}+O(\Delta^{3})\right)  m\Delta^{2}\nonumber\\
&  \qquad+\mathcal{L}^{2}\left(
\begin{array}
[c]{c}%
\rho^{(k-m-1)}+\mathcal{L}(\rho^{(k-m-1)})\Delta\\
+\frac{1}{2}\mathcal{J}(\rho^{(k-m-1)})\Delta^{2}+O(\Delta^{3})
\end{array}
\right)  \left(  1+\cdots+m-1\right)  \Delta^{2}\\
&  =\rho^{(k-m-1)}+\mathcal{L}(\rho^{(k-m-1)})\Delta+\frac{1}{2}%
\mathcal{J}(\rho^{(k-m-1)})\Delta^{2}+O(\Delta^{3})\nonumber\\
&  \qquad+\mathcal{L}\left(  \rho^{(k-m-1)}\right)  m\Delta+\mathcal{L}%
^{2}(\rho^{(k-m-1)})m\Delta^{2}\nonumber\\
&  \qquad+\frac{1}{2}\mathcal{J}\left(  \rho^{(k-m-1)}\right)  m\Delta
^{2}+\mathcal{L}^{2}\left(  \rho^{(k-m-1)}\right)  \left(  1+\cdots
+m-1\right)  \Delta^{2}\\
&  =\rho^{(k-m-1)}+\mathcal{L}(\rho^{(k-m-1)})\left(  m+1\right)  \Delta
+\frac{1}{2}\mathcal{J}(\rho^{(k-m-1)})\left(  m+1\right)  \Delta
^{2}\nonumber\\
&  \qquad+\mathcal{L}^{2}\left(  \rho^{(k-m-1)}\right)  \left(  1+\cdots
+m\right)  \Delta^{2}+O(\Delta^{3}).
\end{align}
This inductive proof thus establishes the claimed formula in \eqref{eq:inductive-assumption}.


By setting $k = m = n$ and plugging into \eqref{eq:inductive-assumption}, we obtain an expression for the output state after the $n^{\text{th}}$ iteration, in terms of the input state, i.e., $\rho_{RS}^{(0)} = \rho_{RS}$,
\begin{equation}
    \rho_{RS}^{(n)} = \rho_{RS} +  \mathcal{L}(\rho_{RS}) n\Delta +  \frac{1}{2} \mathcal{J}(\rho_{RS}) n \Delta^2 +   (\mathcal{L} \circ \mathcal{L})(\rho_{RS}) \frac{n (n-1)}{2}\Delta^2 + O(\Delta^3).
\end{equation}
Substituting $t = n\Delta$ into the  target state given by \eqref{eq:single-op-thm-proof-target} and comparing it with the above state, we get
\begin{align}
    & \frac{1}{2}\left \Vert \rho_{RS}^{(n)} -  e^{\mathcal{L}n \Delta}(\rho_{RS})  \right\Vert_{1} \notag \\
    & = \frac{1}{4} \left \Vert \mathcal{J}(\rho_{RS}) n \Delta^2 -   (\mathcal{L} \circ \mathcal{L})(\rho_{RS}) n\Delta^2 + O(\Delta^3)\right \Vert_{1}\\
    & \leq O(n \Delta^2).
\end{align}
This implies that in order to achieve the desired accuracy of $O(\varepsilon)$, we need to repeat Algorithm~1 $n = O(t^2/\varepsilon)$ times, and for this, we need $n = O(t^2/\varepsilon)$ copies of the program state $\psi$. As the aforementioned argument holds for an arbitrary input state $\rho_{RS}$, the error bound holds more generally for the diamond distance.

\bibliography{references}
\bibliographystyle{plain}

\end{document}